\documentclass{ieee-ojvt}
\usepackage{cite}
\usepackage{amsmath,amssymb,amsfonts}
\usepackage{algorithmic}
\usepackage{graphicx,color}
\usepackage{textcomp}

\usepackage{amsthm,bm}
\usepackage{xcolor}
\usepackage{booktabs}
\usepackage{tabularx}
\usepackage{array}        % For advanced table column control\emph{
\usepackage[caption=false, font=footnotesize]{subfig}
\newtheorem{theorem}{Theorem}
\newtheorem{lemma}{Lemma}
\newtheorem{Corollary}{Corollary}
\newtheorem{rem}{Remark}
\newtheorem{prop}{Proposition}

\def\BibTeX{{\rm B\kern-.05em{\sc i\kern-.025em b}\kern-.08em
    T\kern-.1667em\lower.7ex\hbox{E}\kern-.125emX}}
\AtBeginDocument{\definecolor{ojcolor}{cmyk}{0.93,0.59,0.15,0.02}}
\def\OJlogo{\vspace{-12pt}\includegraphics[height=24pt]{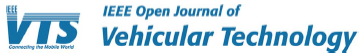}}
\begin{document}
%\receiveddate{XX Month, XXXX}
%\reviseddate{XX Month, XXXX}
%\accepteddate{XX Month, XXXX}
%\publisheddate{XX Month, XXXX}
%\currentdate{27 June, 2024}
%\doiinfo{OJVT.2024.0627000}

\title{The ISAC Tradeoff Cliff: Fundamental Limits under Waveform Uncertainty and Finite Blocklength}

\author{Mohammed Zafar Ali Khan\authorrefmark{1}, SENIOR MEMBER, IEEE,  AND Lajos Hanzo\authorrefmark{1}, LIFE FELLOW, IEEE
}
\affil{Indian Institute of  
Technology Hyderabad,Telangana 502284, India}
\affil{School of ECS, University of Southampton, 4004 UK}
\corresp{CORRESPONDING AUTHOR: Lajos Hanzo (e-mail: hanzo@soton.ac.uk).}
%\authornote{This work was supported ?.}
\markboth{The ISAC Tradeoff Cliff: Fundamental Limits under Waveform Uncertainty and Finite Blocklength}{Mohammed Zafar Ali Khan and Lajos Hanzo}

\begin{abstract}
Integrated Sensing and Communication (ISAC) enables the joint design of communication and sensing functionalities using a shared waveform, but its performance critically depends on the accuracy of the transmitted signal available at the sensing receiver. In practical systems, this signal is obtained by finite blocklength  decoding . This paper develops an analytical framework for quantifying the impact of the associated  decoding uncertainty on sensing performance. We model decoding errors via an equivalent noise formulation that captures their second-order effect on the sensing receiver, and derive the corresponding Fisher Information and Cramér--Rao Bound (CRB). The resultant characterization reveals a fundamental sensing--communication tradeoff governed by the communication rate and decoding reliability.  This shows that the classical finite blocklength reliability results in  a fundamentally new insight concerning ISAC systems: when the decoded communication waveforms are reused as sensing references in a data-aided fashion, the communication reliability boundary also acts as a sensing-information bound.

{We identify and characterize a sharp transition in sensing performance -- referred to as the Tradeoff Cliff -- arising from finite blocklength reliability effects.} This transition separates the regimes of near-ideal sensing performance from those of excessive estimation error as the communication rate approaches capacity. Furthermore, we provide an explicit expression for the critical rate at which this transition occurs, showing its dependence on blocklength, channel dispersion, and target error probability. Monte Carlo simulations under both AWGN and block fading channels validate the theoretical analysis and confirm the trend observed. The results provide design insights for operating ISAC systems with an appropriate reliability margin to avoid severe sensing degradation.
\end{abstract}
%\end{abstract}

\begin{IEEEkeywords}
ISAC, Finite Blocklength (FBL), Waveform Uncertainty, Fisher Information, Tradeoff Cliff.
\end{IEEEkeywords}

\maketitle
%%%%%%%%%%%%%%%%%%%%%%%%%%%%%%%%%%%%%%%%%%%%%%%%%%%%%%%%%%%%%%%%%%%%%%%
%%%%%%%%%%%%%%%%%%%%%%%%%%%%%%%%%%%%%%%%%%%%%%%%%%%%%%%%%%%%%%%%%%%%%%
\section{Introduction}

Integrated sensing and communication (ISAC) has emerged as a key next-generation (NG) paradigm enabling unified wireless communication and environmental sensing using shared spectrum, hardware, and waveforms \cite{Liu2022_ISAC,Liu2018_MUMIMO,Liu2020_ISACFramework,wei2021joint}. Early research primarily focused on radar--communication coexistence and interference management \cite{Zheng2019_Coexistence}, whereas recent developments investigate fully dual-functional waveform design and joint sensing--communication transmission strategies \cite{Hassanien2016_RadarComm,Liu2021_Resource,Wav1}. From an information-theoretic perspective, ISAC tradeoffs have also been studied through rate--distortion, CRB--rate region, and capacity-based formulations \cite{Ding2021_IT_ISAC,Bica2022_Tradeoff,Xiong2023_Fundamental,Yifeng}. However, most existing ISAC frameworks implicitly assume that the sensing processor has access to a perfectly known reference waveform.

This assumption becomes increasingly restrictive in practical communication-centric sensing architectures. In many ISAC systems, particularly those involving small-cell base stations, user equipment (UE), or communication-assisted sensing (CAS), the sensing scheme relies on potentially error-infested decoded communication waveforms, rather than on the  perfectly known reference signal \cite{Dong2024CAS,Masouros2020}. As a result, sensing performance becomes inherently coupled to communication reliability. This issue \textcolor{black}{also arises} in mmWave and THz systems, where hardware impairments such as oscillator phase noise, IQ imbalance, and power-amplifier nonlinearities introduce waveform mismatch and residual sensing uncertainty \cite{Arslan2025_PA,Le2024Hardware,Li2021Mismatched,Ozturk2022,mezghani2022blind}.
\textcolor{black}{In contrast to conventional communication-assisted sensing (CAS) frameworks
\cite{Dong2024CAS,Masouros2020}, which typically assume that the sensing
scheme operates with a decoded communication waveform, the proposed
framework treats finite-blocklength decoding uncertainty as a specific
source of reference-waveform mismatch. More generally, the same
covariance-equivalent formulation is applicable to other sources of
waveform mismatch, including hardware impairments such as oscillator
phase noise, IQ imbalance, and power-amplifier nonlinearities
\cite{Arslan2025_PA,Le2024Hardware,Li2021Mismatched,Ozturk2022,mezghani2022blind}. In
this paper, we focus on the finite-blocklength case, where the mismatch
covariance becomes reliability-dependent and gives rise to the
reliability-induced \emph{Tradeoff Cliff}.}

Emerging NG services such as Ultra-Reliable Low-Latency Communications (URLLC) increasingly rely on short-packet transmissions, where communication reliability is governed by finite blocklength (FBL) theory rather than by the asymptotic Shannon capacity \cite{Polyanskiy2010_FBL}. In this regime, the decoding error probability escalates sharply as the communication rate approaches capacity. For sensing architectures that reuse decoded communication waveforms as sensing references, this implies that decoding unreliability directly results in  sensing degradation and hence also degrades the achievable Cramér--Rao Bound (CRB) \cite{shehab2023finite,Zhang2026}. Similar challenges also arise in CommSense \cite{Commsense1,Commsense2,Tan2021_CommSense} and Wi-Fi fingerprinting systems \cite{Bahl2000_RADAR,Ma2019_WiFi,Zhang2022_DeepWiFi}, where sensing is performed using ambient communication signals without access to an ideal reference waveform.  {Preliminary information-theoretic insights of Commsense were presented in the poster \cite{Khan2026_CommSenseLimits}.}

Recent surveys have identified the characterization of fundamental ISAC performance limits under realistic communication constraints as an important open research direction \cite{Lu2024OpenChallenges}. Nevertheless, most existing analyses either assume asymptotically reliable decoding or model sensing uncertainty as a static impairment independent of communication reliability \cite{Blandino2023,Demirhan2022}. Consequently, the impact of finite blocklength decoding uncertainty on Fisher Information remains insufficiently understood.
\begin{figure*}[htbp]
    \centering
    \includegraphics[width=0.85\linewidth,height=2in]{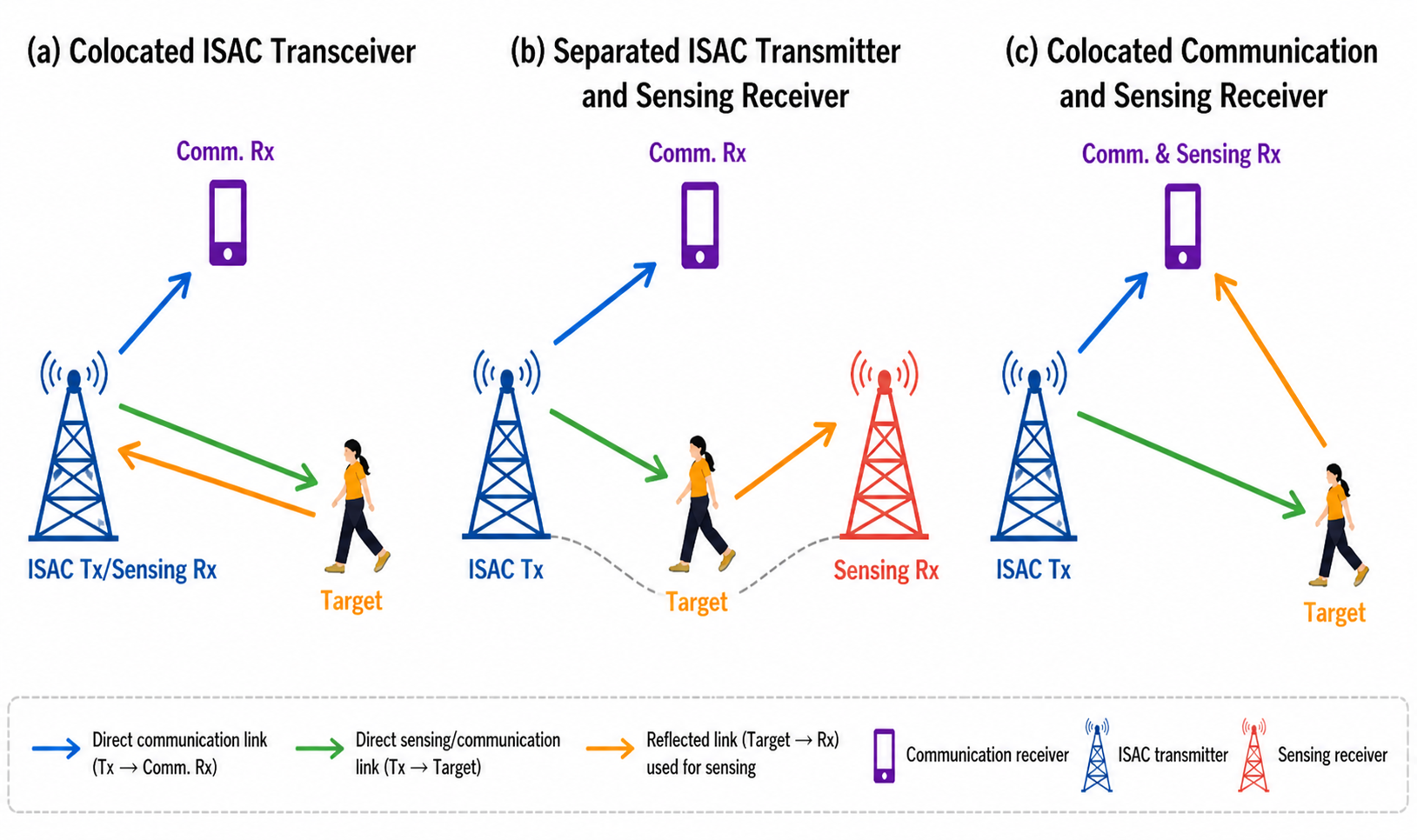}
    \caption{Representative ISAC architectures considered in this work: (a) colocated ISAC transceiver, (b) separated sensing receiver (bistatic sensing), and (c) colocated communication and sensing receiver. The sensing receiver operates using a decoded estimate of the transmitted waveform, introducing reliability-dependent waveform uncertainty.}
    \label{fig:system_model}
\end{figure*}
\textcolor{black}{Recent studies have also investigated communication--sensing tradeoffs
under finite-blocklength communication
\cite{Shen2026TIT,Lin2026TWC}. These works primarily focus on
communication-theoretic tradeoff characterization and resource
optimization under finite-blocklength constraints. In contrast, the
present work addresses a complementary problem by developing a
reliability-aware uncertainty propagation framework that explicitly
quantifies how finite-blocklength decoding uncertainty degrades Fisher
Information and inflates the Cram\'er--Rao Bound through a
covariance-equivalent waveform-mismatch model.}

Motivated by this gap, this paper develops a reliability-aware sensing framework for ISAC systems operating under finite blocklength communication constraints. The novelty of this work does not lie in the finite blocklength reliability transition itself, which is well established in communication theory, but rather in demonstrating how this reliability transition fundamentally propagates into sensing-information degradation when decoded communication waveforms are reused as sensing references. \textcolor{black}{Within the adopted decoded-reference
sensing framework, the proposed analysis establishes the fundamental
reliability limits arising from finite-blocklength decoding uncertainty.} In particular, the proposed framework establishes an explicit analytical relationship between finite blocklength decoding uncertainty and Fisher Information degradation.

The main contributions of this paper are summarized as follows:

\begin{itemize}

\item \textbf{Reliability-Aware Sensing Model:}
We formulate a data decoding-uncertainty-aware ISAC framework in which the sensing receiver operates using decoded communication waveforms rather than perfectly known reference signals.

\item \textbf{Fisher Information and CRB Characterization:}
We derive both the Fisher Information (FI) and the corresponding Cramér--Rao Bound under decoding uncertainty, explicitly quantifying the impact of communication reliability on sensing precision.

\item \textbf{Finite Blocklength Reliability Analysis:}
By harnessing finite blocklength communication theory, we characterize the resultant sensing-performance transition \textcolor{black}{and establish the
fundamental reliability limits within the adopted decoded-reference
sensing framework, explicitly revealing  their} dependence on the communication rate, blocklength, and SNR. 

\item \textbf{Tradeoff Geometry and Scaling Laws:}
We characterize the Pareto boundary of the  communication rate and sensing performance, derive the associated high-curvature sensing transition, and establish the safe operating margin scaling law of
\[
C-R_{\mathrm{safe}}
=
\Theta(n^{-1/2}).
\]

\item \textbf{Reliability-Aware Optimization:}
We formulate a sensing-communication optimization framework that maximizes sensing performance subject to communication reliability and Quality-of-Service constraints.

\end{itemize}

Table~\ref{tab:isac_comparison} summarizes the differences between classical ISAC formulations and the proposed reliability-aware framework.
In contrast to classical CRB--rate tradeoff formulations such as \cite{Xiong2023_Fundamental}, the proposed framework explicitly incorporates finite blocklength decoding reliability into the sensing-information characterization \textcolor{black}{ within the adopted
decoded-reference based sensing framework}.
\begin{table*}[t]
\centering
\caption{Comparison with Representative ISAC and Reliability-Aware Sensing Literature}
\label{tab:isac_comparison}
\renewcommand{\arraystretch}{1.15}
\setlength{\tabcolsep}{4pt}

\begin{tabular}{|p{4.2cm}|c|c|c|c|c|c|c|}
\hline

\textbf{Reference}
&
\shortstack{\textbf{Decoded /}\\ \textbf{Imperfect}\\ \textbf{Waveforms}}
&
\shortstack{\textbf{Finite}\\ \textbf{Blocklength}}
&
\shortstack{\textbf{Reliability-}\\ \textbf{Aware}\\ \textbf{Sensing}}
&
\shortstack{\textbf{FI / CRB}\\ \textbf{Analysis}}
&
\shortstack{\textbf{Tradeoff}\\ \textbf{Cliff}}
&
\shortstack{\textbf{Safe Margin}\\ \textbf{Scaling}}
&
\shortstack{\textcolor{black}{\textbf{Reliability-Aware}}\\ \textcolor{black}{\textbf{FI/CRB}}}
\\
\hline

Liu \emph{et al.}~\cite{Liu2018_MUMIMO}
&

&

&

&

&

&

&
\\
\hline

Liu \emph{et al.}~\cite{Liu2020_ISACFramework}
&

&

&

&

&

&

&
\\
\hline

Xiong \emph{et al.}~\cite{Xiong2023_Fundamental}
&

&

&

&
\checkmark
&

&

&
\\
\hline

Hassanien \emph{et al.}~\cite{Hassanien2016_RadarComm}
&

&

&

&

&

&

&
\\
\hline

Dong \emph{et al.}~\cite{Dong2024CAS}
&
\checkmark
&

&

&

&

&

&
\\
\hline

Blandino \emph{et al.}~\cite{Blandino2023}
&
\checkmark
&

&

&

&

&

&
\\
\hline

Polyanskiy \emph{et al.}~\cite{Polyanskiy2010_FBL}
&

&
\checkmark
&

&

&

&

&
\\
\hline

Tan \emph{et al.}~\cite{Tan2021_CommSense}
&
\checkmark
&

&

&

&

&

&
\\
\hline

\textcolor{black}{Lin \emph{et al.}~\cite{Lin2026TWC}}
&

&
\checkmark
&
\checkmark
&

&

&

&
\\
\hline
\textcolor{black}{Shen \emph{et al.}~\cite{Shen2026TIT}}
&

&
\checkmark
&

&

&

&

&
\\
\hline
\textbf{Proposed Framework}
&
\checkmark
&
\checkmark
&
\checkmark
&
\checkmark
&
\checkmark
&
\checkmark
&
\checkmark
\\
\hline

\end{tabular}
\end{table*}

The remainder of this paper is organized as follows. Section~\ref{sec2} presents the system model and decoding uncertainty framework. Section~\ref{sec3} derives the Fisher Information under decoding uncertainty, while Section~\ref{sec4} characterizes the resultant sensing--communication tradeoff geometry. Section~\ref{sec6} develops the associated reliability-aware optimization framework. Our numerical results are presented in Section~\ref{sec7}, followed by our conclusions in Section~\ref{sec8}.
%%%%%%%%%%%%%%%%%%%%%%%%%%%%%%%%%%%%%%%%%%%%%%%%%%%%%%%%%%%%%%%%%%%%%%%%%
%%%%%%%%%%%%%%%%%%%%%%%%%%%%%%%%%%%%%%%%%%%%%%%%%%%%%%%%%%%%%%%%%%%%%%%%%%%
\section{System Model}\label{sec2}

We consider a point-to-point integrated sensing and communication (ISAC) system in which a dual-functional transmitter simultaneously performs wireless communication and target sensing using a shared waveform \(x\). The sensing receiver may be co-located with the transmitter, co-located with the communication receiver, or spatially separated, as illustrated in Fig.~\ref{fig:system_model}. In contrast to conventional ISAC formulations that assume perfect waveform knowledge, the sensing receiver operates using a potentially error-infested decoded estimate of the transmitted signal, denoted by \(\hat{x}\). Consequently, the sensing performance becomes coupled to communication reliability.

\textcolor{black}{\textbf{Applicability of the sensing architecture:} Figure 1 illustrates representative ISAC architectures in which the sensing processor may obtain the reference waveform through different paths. The proposed framework is applicable when the waveform available at the sensing processor is an imperfect representation of the actually transmitted waveform. This includes receiver-side, passive, communication-assisted, and third-party sensing architectures in which the reference waveform may be obtained through a decoded communication signal. In the finite-blocklength setting considered in this work, decoding errors introduce uncertainty between the transmitted waveform and the decoded reference, resulting in a reliability-dependent reference mismatch.
\\
Importantly, the framework does not imply that a decoded reference is required in every ISAC architecture. In conventional monostatic or transmitter-side ISAC systems, where the sensing processor is colocated with the transmitter and has direct access to the transmitted baseband samples, the transmitted waveform is known and the decoding-induced reference mismatch is absent. Such a system corresponds to the zero-mismatch special case of the proposed framework. The framework can also represent other practical sources of reference uncertainty, such as synchronization errors, calibration inaccuracies, channel-estimation errors, and hardware impairments, by appropriately modeling the corresponding mismatch covariance.
\\
The specific \textbf{Tradeoff Cliff} investigated in this work, however, results from the coupling between finite-blocklength decoding reliability and the reference-waveform mismatch covariance. Hence, while the covariance-based formulation is applicable more broadly to imperfect-reference sensing, the reliability-dependent Tradeoff Cliff derived in this paper is specifically associated with decoding-induced waveform uncertainty.
}
\subsection{Signal and Communication Model}

The received sensing signal is modeled as
\begin{equation}
y=hx+w,
\end{equation}
where:
\begin{itemize}
\item \(h\) denotes the sensing parameter or target-dependent channel coefficient,
\item \(w\sim\mathcal{CN}(0,N_0)\) is additive complex Gaussian noise,
\item and \(x\) is the transmitted waveform satisfying
\[
\mathbb{E}[|x|^2]=P.
\]
\end{itemize}

Communication relies on the finite blocklength \(n\) at transmission rate \(R\) (bits/channel use). Under finite blocklength (FBL) communication, the decoding error probability is approximated as \cite{Polyanskiy2010_FBL,shehab2023finite}
\begin{equation}
P_e(R,n)
\approx
Q\!\left(
\sqrt{\frac{n}{V}}
(C-R)\ln2
\right),
\end{equation}
where \(C\) and \(V\) denote the Shannon capacity and channel dispersion, respectively.

\subsection{Decoding Uncertainty Model}

The sensing receiver utilizes the FBL decoded communication waveform as the sensing reference. We model the decoded signal as
\begin{equation}\label{eq3}
\hat{x}=x+e,
\end{equation}
where \(e\) denotes the effective aggregate decoding distortion. We assume
\[
e\sim\mathcal{CN}[0,\sigma_e^2(R,n)],
\]
with \(e\)  being independent of both \(x\) and \(w\).

\textcolor{black}{The variance $\sigma_e^2(R,n)$ captures the residual uncertainty induced
by finite-blocklength decoding and increases with the communication
unreliability. As shown by the covariance decomposition developed in
Appendix~\ref{app:model}, the aggregate residual covariance is obtained
by averaging over the successful and unsuccessful decoding events.
Since the residual covariance under successful decoding is negligible,
the aggregate residual uncertainty scales with the decoding error
probability. Let
\[
P_e=P_e(R,n)
\]
denote the finite-blocklength decoding error probability. This yields
the equivalent covariance model
\begin{equation}
\sigma_e^2(R,n)
=
\kappa P_e(R,n),
\end{equation}
where $\kappa>0$ denotes the average per-symbol residual distortion
energy conditioned on a block decoding failure and therefore depends on
the coding scheme, modulation format, decoder implementation, and
operating conditions. For a given communication system, the value of $\kappa$ can be obtained
through decoder simulations or experimental measurements by estimating
the average residual distortion energy conditioned on unsuccessful block
decoding.} 

\textcolor{black}{The proposed formulation models the aggregate second-order impact of
decoding uncertainty on sensing rather than the exact symbol-level
structure of decoding failures. Appendix~\ref{app:model} provides the
complete covariance-based derivation of the equivalent Gaussian
decoding-distortion model together with a discussion of its assumptions,
scope, and limitations.}
\subsection{Sensing Model}
The monostatic sensing receiver probes the environment and receives an echo signal reflected from the target. Traditionally, radar processing performs matched filtering using the perfectly known reference waveform $x$. However, in this uncertainty-aware model, the receiver is constrained to use the imperfect estimate $\hat{x}$ as the `reference' signal. This mismatch between the actual echo (generated by $x$) and the reference (the noisy $\hat{x}$) introduces a bias and increased variance in the estimation of sensing parameters, such as delay and Doppler shift.
%%%%%%%%%%%%%%%%%%%%%%%%%%%%%%%%%%%%%%%%%%%%%%%%%%%%%%%%%%%%%%%%%%%%%%%%%
%%%%%%%%%%%%%%%%%%%%%%%%%%%%%%%%%%%%%%%%%%%%%%%%%%%%%%%%%%%%%%%%%%%%%%%%%
\section{Fisher Information with Decoding Uncertainty}\label{sec3}
Fisher \textcolor{black}{I}nformation and \textcolor{black}{the associated} CRB-based formulations are widely used for characterizing the sensing performance and the fundamental sensing--communication tradeoffs in ISAC systems \cite{Fortunati2020_CRB_MIMO_CRB,Xiong2023_Fundamental,Pucci2025_FI_ISAC,hua2023mimo,CRB1,PER1}. \textcolor{black}{Throughout this paper, the sensing performance metric
$S(R,n)$ denotes the normalized Fisher Information.
For an observation consisting of $n$ coherently processed samples,
the total Fisher Information satisfies
\[
I_{\rm tot}(R,n)\approx n\,S(R,n),
\]
where the linear dependence on $n$ represents the classical coherent
integration gain. The normalization adopted in this paper removes this
well-known scaling so that the analysis isolates the effect of
finite-blocklength decoding reliability on sensing performance.
We emphasize that this normalization is a deliberate definitional choice rather than an incidental simplification: it separates the classical coherent integration gain — which scales $I_{\rm tot}(R,n)$ uniformly across all operating rates $R$ — from the reliability-dependent covariance inflation governed by $P_e(R,n)$, which reshapes the relative sensing performance across $R$. Consequently, the blocklength $n$ influences sensing performance through two distinct mechanisms: (i) a uniform scaling of absolute Fisher Information via integration gain, and (ii) a reliability-driven sharpening of the Tradeoff Cliff (cf. Theorem 3), which is a statement about the shape of $S(R,n)$.
}
\subsection{Equivalent Model and Effective Noise}
To analyze the sensing limits, we consider the observation at the collocated receiver. Given the decoded reference $\hat{x} = x + e$, the received echo of $y = hx + w$ can be reformulated to isolate the available reference:
\begin{equation}
    y = h\hat{x} + \tilde{w}, \quad \text{where } \tilde{w} = w - he.
\end{equation}
Here, $w \sim \mathcal{CN}(0, N_0)$ represents thermal noise and $e \sim \mathcal{CN}(0, \sigma_e^2)$ represents the decoding uncertainty. Assuming that $w$ and $e$ are independent, the effective noise $\tilde{w}$ follows a complex Gaussian distribution $\tilde{w} \sim \mathcal{CN}(0, \sigma_{\tilde{w}}^2)$ with variance:
\begin{equation}
    \sigma_{\tilde{w}}^2 = \mathbb{E}[|\tilde{w}|^2] = N_0 + |h|^2 \sigma_e^2(R, n).
\end{equation}

\subsection{Approximate Fisher Information under Decoding Covariance Inflation}
The effective observation model of $
y=h\hat{x}+\tilde w,
\qquad
\tilde w=w-he,
$
induces a heteroscedastic likelihood because the effective noise variance depends on the unknown sensing parameter $h$. Consequently, the resultant Fisher Information contains both mean-gradient and variance-gradient contributions. To obtain a tractable closed-form characterization, we adopt the equivalent second-order covariance framework developed in Appendix~\ref{app:model} and neglect higher-order residual correlation terms arising from the coupling between $\hat{x}$ and the decoding distortion.

\textcolor{black}{Since the decoded waveform $\hat{x}$ is itself a random quantity resulting
from the finite-blocklength decoding process, the sensing performance is
naturally characterized by the expected Fisher Information obtained by
first conditioning on a fixed decoded-reference realization $\hat{x}$,
and then averaging over the distribution of $\hat{x}$ induced by
finite-blocklength decoding:
\begin{equation}
    I(h) = \mathbb{E}_{\hat{x}}\big[I(h|\hat{x})\big].
    \label{eq:expected_FI}
\end{equation}
This two-step construction --- first the conditional Fisher Information
$I(h|\hat{x})$ for a fixed $\hat{x}$, then the expectation over
$\hat{x}$ --- is made explicit in Appendix B. The physical meaning of
this metric is that it quantifies the expected sensing information
available to a decoded-reference ISAC receiver before the realization of
the decoding outcome is known, and therefore characterizes its average
sensing capability under finite-blocklength communication. This
interpretation is consistent with classical estimation theory~\cite{kay1993}
and recent communication-centric ISAC frameworks that evaluate sensing
performance by averaging over random communication waveform
realizations~\cite{Liu2025RandomISAC,Lu2025RandomSignals}. Accordingly, unless
otherwise stated, all Fisher Information expressions derived in this paper
refer to this expected Fisher Information.
}

\begin{theorem}
Under the equivalent covariance-based decoding uncertainty model developed in Appendix~\ref{app:model}, and assuming moderate decoding uncertainty for ensuring that higher-order residual correlation terms are negligible, the \textcolor{black}{expected} Fisher Information (FI) for estimating the complex channel gain $h$ admits the approximation
\begin{equation}\label{eq8}
I(h)
\approx
\frac{\mathbb{E}[|\hat{x}|^2]}
{N_0+|h|^2\sigma_e^2}
+
\frac{|h|^2\sigma_e^4}
{(N_0+|h|^2\sigma_e^2)^2}.
\end{equation}
\end{theorem}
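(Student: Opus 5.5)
The plan is to treat the effective model $y=h\hat{x}+\tilde w$ of Section~\ref{sec3}, conditioned on a fixed decoded reference $\hat{x}$, as a complex Gaussian observation whose mean and variance both depend on $h$. Specifically,
\[
y\,|\,\hat{x}\sim\mathcal{CN}\big(\mu(h),\sigma^2(h)\big),\qquad \mu(h)=h\hat{x},\qquad \sigma^2(h)=N_0+|h|^2\sigma_e^2 .
\]
We then apply the Slepian--Bangs formula for a scalar complex Gaussian. With the Wirtinger derivative $\partial/\partial h^*$, treating $h$ and $h^*$ as independent as is standard for a complex parameter, it gives
\[
I(h\,|\,\hat{x})=\frac{|\partial\mu/\partial h|^2}{\sigma^2(h)}+\frac{|\partial\sigma^2/\partial h^*|^2}{\sigma^4(h)} .
\]
Within the second-order covariance framework of Appendix~\ref{app:model}, we treat $\tilde w$ as independent of $\hat{x}$. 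That is, we drop the correlation between $\hat{x}=x+e$ and $-he$, which is exactly the ``higher-order residual correlation'' term the hypothesis of moderate decoding uncertainty allows us to neglect. This is what makes the conditional law Gaussian with the stated mean and variance.

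The two terms are then computed separately. The mean-gradient term is $\partial\mu/\partial h=\hat{x}$, which contributes $|\hat{x}|^2/(N_0+|h|^2\sigma_e^2)$. The variance-gradient term uses $\partial(|h|^2\sigma_e^2)/\partial h^*=h\sigma_e^2$, so its squared magnitude is $|h|^2\sigma_e^4$, giving the contribution $|h|^2\sigma_e^4/(N_0+|h|^2\sigma_e^2)^2$. This second term does not depend on $\hat{x}$. Taking the expectation over $\hat{x}$ as prescribed by \eqref{eq:expected_FI} is therefore linear and only replaces $|\hat{x}|^2$ by $\mathbb{E}[|\hat{x}|^2]$, which yields \eqref{eq8}. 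As a consistency check, when $\sigma_e^2=0$ the expression collapses to the classical $P/N_0$ (using $\mathbb{E}|\hat{x}|^2=P$), i.e.\ the zero-mismatch special case.

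The main obstacle is justifying the conditioning step honestly. Given $\hat{x}$, the error $e$ is not independent of $\hat{x}$. Under the Gaussian model, $e\,|\,\hat{x}$ has mean $\frac{\sigma_e^2}{P+\sigma_e^2}\hat{x}$ and reduced variance $\frac{P\sigma_e^2}{P+\sigma_e^2}$. The exact conditional mean is therefore $h\hat{x}\,P/(P+\sigma_e^2)$, with a correspondingly shrunken variance. I would expand these corrections in $\sigma_e^2/P$ and show that they perturb each term of the Slepian--Bangs expression only at relative order $O(\sigma_e^2/P)$. The leading-order expression is then \eqref{eq8}, and invoking the moderate-uncertainty assumption ($\sigma_e^2\ll P$, consistent with $\sigma_e^2=\kappa P_e$ and small $P_e$) discards the remainder. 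A further point to state explicitly is the complex-parameter convention: the factor in the variance term depends on whether Wirtinger or real/imaginary parametrization is used. I would fix the Wirtinger convention so that the constants match \eqref{eq8} exactly.
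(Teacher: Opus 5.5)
Your argument is the paper's proof: dropping the $\hat{x}$--$\tilde w$ correlation gives $y\,|\,\hat{x}\sim\mathcal{CN}(h\hat{x},N_0+|h|^2\sigma_e^2)$. Its Wirtinger Fisher information, which the paper gets by squaring the explicit score and you get from Slepian--Bangs, is exactly $|\hat{x}|^2/(N_0+|h|^2\sigma_e^2)+|h|^2\sigma_e^4/(N_0+|h|^2\sigma_e^2)^2$, and averaging over $\hat{x}$ only replaces $|\hat{x}|^2$ by $\mathbb{E}[|\hat{x}|^2]$.

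Your Gaussian-conditioning check plays the role of the paper's Neumann-series Proposition, but it does not deliver what you claim for it. Carried out exactly for Gaussian $x$, it gives $P/\tilde\sigma^2+|h|^2\sigma_e^4/\tilde\sigma^4$ with $\tilde\sigma^2=N_0+|h|^2\sigma_e^2+N_0\sigma_e^2/P$. This lies below \eqref{eq8} by more than the entire variance-gradient term. So the expansion certifies \eqref{eq8} only to relative accuracy $O(\sigma_e^2/P)$, which is no better than $P/(N_0+|h|^2\sigma_e^2)$ alone; it does not single out \eqref{eq8} as the leading-order expression. As in the paper, the theorem rests on the stated hypothesis that the correlation is neglected outright.
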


\begin{proof}
See Appendix \ref{app:FI_proof}.
\end{proof}
The first term corresponds to the conventional mean-gradient contribution scaled by the inflated effective covariance, while the second term captures the variance-gradient contribution imposed by the parameter-dependent decoding uncertainty.

The analysis extends naturally to multi-parameter estimation as seen in the next subsection.
%; see Appendix \ref{app:vector_fi} for the vector Fisher Information formulation.
\subsubsection{High-SNR Approximation}
In the practically relevant regime where the equivalent decoding covariance remains moderate relative to the signal power, i.e.,
$
\sigma_e^2 \ll P,
$
the dominant sensing degradation mechanism arises through effective covariance inflation. Accordingly,
\[\mathbb{E}[|\hat x|^2]
=
P+\sigma_e^2
\approx
P.
\]
Consequently, using Appendix Subsection \ref{AG}, the sensing metric obeys $S(R, n) \approx \frac{P}{N_0 + |h|^2 \sigma_e^2(R, n)}$.
\subsection{\color{black}Extension to Multi-Parameter Estimation}
{\color{black}
The scalar Fisher Information analysis captures how finite-blocklength decoding uncertainty impacts sensing. However, practical ISAC systems require the joint estimation of multiple target parameters.

Let the unknown parameter vector be
\begin{equation}
\boldsymbol{\theta} = \left[ \tau,\; \nu,\; \Re\{h\},\; \Im\{h\} \right]^T,
\end{equation}
where $\tau$, $\nu$, and $h$ denote the target delay, Doppler shift, and complex channel gain, respectively.

Under the decoded-reference observation model, the received signal depends on $\boldsymbol{\theta}$ through both its mean and effective covariance. Consequently, the sensing performance is characterized by the vector Fisher Information Matrix (FIM).
}
\begin{theorem}[Tradeoff Cliff under Multi-Parameter Estimation]\label{mulparThem}
{\color{black}
Consider the parameter vector $\boldsymbol{\theta} = \left[ \tau,\; \nu,\; \Re\{h\},\; \Im\{h\} \right]^T$. Under the equivalent covariance-inflation framework, the Fisher Information Matrix (FIM) is approximated by

\begin{equation}
\mathbf{I}(\boldsymbol{\theta}) \approx \mathbf{I}_{\rm mean} + \mathbf{I}_{\rm var},
\end{equation}

where $\mathbf{I}_{\rm mean}$ and $\mathbf{I}_{\rm var}$ arise from the parameter dependence of the observation mean and covariance. Under moderate decoding uncertainty ($\sigma_e^2 \ll P$), this simplifies to

\begin{equation}
\mathbf{I}(\boldsymbol{\theta}) \approx \frac{1}{N_0+|\alpha|^2\sigma_e^2(R,n)}
\mathbf{I}^{(0)}(\boldsymbol{\theta}),
\end{equation}

where $\mathbf{I}^{(0)}(\boldsymbol{\theta})$ is the conventional FIM under a perfectly known reference waveform. Consequently, the CRBs for all jointly estimated sensing parameters exhibit the same reliability-induced Tradeoff Cliff as the communication rate approaches the finite-blocklength boundary.}
\end{theorem}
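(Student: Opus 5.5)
We will work with a block (vector) version of the decoded-reference model, since delay and Doppler can only be identified from a waveform sequence. Let $\mathbf{y}=\alpha\,\mathbf{D}_\nu\mathbf{S}_\tau\mathbf{x}+\mathbf{w}$, where $\mathbf{S}_\tau$ is the delay (shift) operator, $\mathbf{D}_\nu$ the Doppler modulation, and $\alpha=h$. Substituting $\mathbf{x}=\hat{\mathbf{x}}-\mathbf{e}$ from \eqref{eq3} gives
\[
\mathbf{y}=\alpha\mathbf{A}(\tau,\nu)\hat{\mathbf{x}}+\tilde{\mathbf{w}},\qquad \tilde{\mathbf{w}}=\mathbf{w}-\alpha\mathbf{A}(\tau,\nu)\mathbf{e},\qquad \mathbf{A}=\mathbf{D}_\nu\mathbf{S}_\tau .
\]
Because $\mathbf{A}$ is unitary (a shift times a unimodular diagonal), and because $\mathbf{e}$ is i.i.d.\ $\mathcal{CN}(0,\sigma_e^2)$ under the covariance model of Appendix~\ref{app:model}, the effective covariance is $\mathbf{C}=(N_0+|\alpha|^2\sigma_e^2)\mathbf{I}$. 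It therefore depends on $\boldsymbol\theta$ only through $|\alpha|$. Conditioning on $\hat{\mathbf{x}}$ as in \eqref{eq:expected_FI}, we apply the Slepian--Bangs formula for a complex Gaussian observation:
\[
[\mathbf{I}(\boldsymbol\theta|\hat{\mathbf{x}})]_{ij}=2\Re\{\partial_i\boldsymbol\mu^H\mathbf{C}^{-1}\partial_j\boldsymbol\mu\}+\mathrm{tr}(\mathbf{C}^{-1}\partial_i\mathbf{C}\,\mathbf{C}^{-1}\partial_j\mathbf{C}).
\]
This immediately yields the claimed split $\mathbf{I}_{\rm mean}+\mathbf{I}_{\rm var}$.

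\textbf{Mean term.} Since $\mathbf{C}$ is a scalar multiple of the identity, $\mathbf{I}_{\rm mean}=(N_0+|\alpha|^2\sigma_e^2)^{-1}\,2\Re\{\partial_i\boldsymbol\mu^H\partial_j\boldsymbol\mu\}$. Taking the expectation over $\hat{\mathbf{x}}$, the Gram entries $\mathbb{E}[\partial_i\boldsymbol\mu^H\partial_j\boldsymbol\mu]$ involve second moments of $\hat{\mathbf{x}}$ and its time-derivative and frequency-weighted versions. These moments equal the corresponding moments of $\mathbf{x}$ plus $\sigma_e^2$-dependent corrections. For example, $\mathbb{E}\|\hat{\mathbf{x}}\|^2=nP+n\sigma_e^2$, consistent with the High-SNR approximation above. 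Under $\sigma_e^2\ll P$ these corrections are dropped, and the matrix reduces to $N_0\,\mathbf{I}^{(0)}(\boldsymbol\theta)$, the standard FIM with known reference. (We take $\mathbf{I}^{(0)}$ normalized per unit noise so that the prefactor appears exactly as stated.) Cross terms between $\tau$, $\nu$ and $\alpha$ inherit the same scaling, because every entry contains one factor of $\mathbf{C}^{-1}$.

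\textbf{Variance term.} $\mathbf{C}$ depends only on $|\alpha|^2$, so $\partial_\tau\mathbf{C}=\partial_\nu\mathbf{C}=0$ and $\mathbf{I}_{\rm var}$ is supported on the $(\Re\alpha,\Im\alpha)$ block. Its entries scale as
\[
\frac{n|\alpha|^2\sigma_e^4}{(N_0+|\alpha|^2\sigma_e^2)^2}.
\]
The main step is to show that this term is negligible relative to the mean block, which is of order $n|\alpha|^{0}P/(N_0+|\alpha|^2\sigma_e^2)$. The ratio of the two is $O\bigl(|\alpha|^2\sigma_e^4/(P(N_0+|\alpha|^2\sigma_e^2))\bigr)\le O(\sigma_e^2/P)$. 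This is exactly the bound that the moderate-uncertainty assumption makes small; the argument mirrors the second term of \eqref{eq8} in the scalar theorem.

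\textbf{Conclusion.} Combining the two terms gives $\mathbf{I}(\boldsymbol\theta)\approx(N_0+|\alpha|^2\sigma_e^2)^{-1}\mathbf{I}^{(0)}$. Inverting, $\mathrm{CRB}(\boldsymbol\theta)\approx(N_0+|\alpha|^2\kappa P_e(R,n))\,[\mathbf{I}^{(0)}]^{-1}$, so every diagonal CRB entry is scaled by the same factor. That factor inherits the steep $Q$-function transition of $P_e$ near $R\to C$, which gives the common Tradeoff Cliff.

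The main obstacle is the mean term. The time-derivative moments of $\hat{\mathbf{x}}$ (e.g., effective bandwidth) can be inflated by white $\mathbf{e}$ more than the energy is. We will handle this by assuming the distortion is band-limited to the signal band, so its relative contribution is also $O(\sigma_e^2/P)$.
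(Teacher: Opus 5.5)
Your proposal is correct and takes essentially the same route as the paper's proof in Appendix~C. Both split the FIM via Slepian--Bangs into mean- and covariance-gradient parts, discard the latter under $\sigma_e^2\ll P$, and factor the common $1/(N_0+|\alpha|^2\sigma_e^2)$ out of every entry, with $\mathbf{I}^{(0)}$ taken per unit noise as the paper does ($I^{(0)}_{ij}=2\Re\{\cdot\}$); your additional steps justify points the paper only asserts, namely that $\mathbf{I}_{\rm var}$ is confined to the $(\Re\alpha,\Im\alpha)$ block with relative size $O(\sigma_e^2/P)$, and that identifying the $\hat{\mathbf{x}}$-based derivative moments with those of $\mathbf{x}$ requires the distortion to occupy the signal band.
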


\begin{IEEEproof}
{\color{black}See Appendix~C.}
\end{IEEEproof}
{\color{black}
Theorem \ref{mulparThem} demonstrates that the proposed reliability-aware sensing framework naturally extends to practical multi-parameter estimation. Although the off-diagonal entries of the Fisher Information Matrix modify the coupling among delay, Doppler, and channel-gain estimates, the reliability-dependent covariance inflation scales the entire matrix. Consequently, the Tradeoff Cliff is a fundamental property of decoded-reference based sensing and is not restricted to the scalar case.}

\textcolor{black}{
\begin{rem}
    The proposed equivalent covariance formulation may also be interpreted
as a tractable second-order approximation to the waveform mismatch
considered in the misspecified Cramér--Rao Bound (MCRB) literature
\cite{Vuong1986,Richmond2015,Fortunati2016}. A complete MCRB treatment for
decoded-reference based sensing under finite-blocklength communication is an
interesting direction for future research, but it is beyond the scope of this work, given the strict page limit.
\end{rem}
}
%%%%%%%%%%%%%%%%%%%%%%%%%%%%%%%%%%%%%%%%%%%%%%%%%%%%%%%%%%%%%%%%%%%%%%%%%%
%%%%%%%%%%%%%%%%%%%%%%%%%%%%%%%%%%%%%%%%%%%%%%%%%%%%%%%%%%%%%%%%%%%%%%%%%%%%%
\section{Tradeoff Cliff Characterization}\label{sec4}
This section formally characterizes the sensing--communication tradeoff imposed by decoding uncertainty under FBL communication. We demonstrate that the sensing performance degrades monotonically as the communication rate increases, creating a sharp transition — the \emph{Tradeoff Cliff} — whose geometry is governed by FBL reliability theory. 

\subsection{Pareto-Optimal Reliability-Driven Tradeoff Boundary}  
\begin{lemma}\label{thm2}
The Pareto boundary defining the tradeoff between the communication rate $R$ and the sensing performance $S$ is:
\begin{equation}\label{eq12}
    S(R, n) = \frac{P}{N_0 + \xi \sigma_e^2(R, n)},
\end{equation}
where $\xi = |h|^2$ is the target-dependent scaling factor.
\end{lemma}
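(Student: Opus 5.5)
The plan has three parts: derive the expression for $S$, show it is decreasing in $R$, and conclude that the curve is the Pareto boundary.

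\textbf{Step 1: the expression for $S$.} We start from Theorem~1 and its high-SNR specialization. Under moderate decoding uncertainty ($\sigma_e^2\ll P$) we have $\mathbb{E}[|\hat x|^2]=P+\sigma_e^2\approx P$. The variance-gradient term $|h|^2\sigma_e^4/(N_0+|h|^2\sigma_e^2)^2$ is second order in $\sigma_e^2$, whereas the mean-gradient term is of order $P/N_0$, so the second term is negligible. Retaining only the first term and writing $\xi=|h|^2$ gives
\[
S(R,n)\approx \frac{P}{N_0+\xi\,\sigma_e^2(R,n)}
\]
for every operating rate $R$ at blocklength $n$. This is the same argument the paper already invokes via Appendix~\ref{AG}.

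\textbf{Step 2: monotonicity in $R$.} Substituting $\sigma_e^2(R,n)=\kappa P_e(R,n)$ with
\[
P_e(R,n)\approx Q\!\left(\sqrt{n/V}\,(C-R)\ln 2\right),
\]
I would differentiate with respect to $R$:
\[
\frac{\partial P_e}{\partial R}=\sqrt{n/V}\,\ln 2\;\phi\!\left(\sqrt{n/V}\,(C-R)\ln 2\right)>0,
\]
where $\phi$ is the standard Gaussian density. Hence $\sigma_e^2$ is strictly increasing in $R$. Since $P,N_0,\xi,\kappa>0$, it follows that
\[
\frac{\partial S}{\partial R}=-\frac{P\,\xi\,\kappa\,\partial P_e/\partial R}{\left(N_0+\xi\sigma_e^2\right)^2}<0,
\]
so $S$ is strictly decreasing in $R$.

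\textbf{Step 3: Pareto optimality.} I would define the achievable region as the set of pairs $(R,S')$ with $S'\le S(R,n)$, since at fixed $R$, $n$, $P$ and $N_0$ the model determines the effective noise and hence the FI. Any pair strictly above the curve would require a smaller $\sigma_e^2$ at the same $R$, which contradicts the fixed reliability $P_e(R,n)$. Any pair on the curve cannot be improved in $R$ without lowering $S$, because $S$ is strictly decreasing. Points on the curve are therefore exactly the non-dominated points, i.e., the Pareto boundary.

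The main obstacle is justifying that the curve is a genuine upper boundary and not just one achievable point. This rests on treating $P_e(R,n)$ as the best achievable error probability, via the normal approximation being tight for optimal codes, and on the covariance model of Appendix~\ref{app:model}. I would state that optimality is meant within the adopted decoded-reference model under the normal approximation.
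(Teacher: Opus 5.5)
Your proposal is correct and follows essentially the same route as the paper. The paper also takes $S$ as the high-SNR form of the expected Fisher Information and obtains the Pareto property from $\partial S/\partial R<0$, using the chain rule through $\sigma_e^2=\kappa P_e(R,n)$ and the increasing normal-approximation $P_e$. Your explicit derivative, your definition of the achievable region $\{(R,S'):S'\le S(R,n)\}$, and your caveat that optimality holds only within the adopted model make rigorous the steps the paper leaves implicit.
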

\begin{proof}
To establish the Pareto boundary, we show that $S$ is monotonically non-increasing with $R$. \textcolor{black}{Recalling that $S(R,n)$ denotes the normalized expected Fisher Information $I(h)$ defined in Section III}, we rely on the normal FBL approximation for the decoding error probability \cite{Polyanskiy2010_FBL}: $P_e(R, n) \approx Q\left( \frac{C - R}{\sqrt{V/n}} \ln 2 \right)$. Upon modeling the uncertainty variance as proportional to this error rate, $\sigma_e^2(R, n) = \kappa P_e(R, n)$ for $\kappa > 0$, we apply the chain rule: $\frac{\partial S}{\partial R} = \frac{\partial S}{\partial \sigma_e^2} \frac{\partial \sigma_e^2}{\partial P_e} \frac{\partial P_e}{\partial R}$. 

Since sensing precision decreases as uncertainty grows ($\frac{\partial S}{\partial \sigma_e^2} < 0$), uncertainty scales linearly with the error ($\frac{\partial \sigma_e^2}{\partial P_e} > 0$), and the error probability increases with rate ($\frac{\partial P_e}{\partial R} > 0$), it follows that $\frac{\partial S}{\partial R} < 0$. Thus, increasing the communication rate strictly degrades the sensing performance, confirming \eqref{eq12} as the Pareto-optimal boundary.
\end{proof}

\begin{figure}[htbp]
    \centering
    \includegraphics[width=0.95\linewidth]{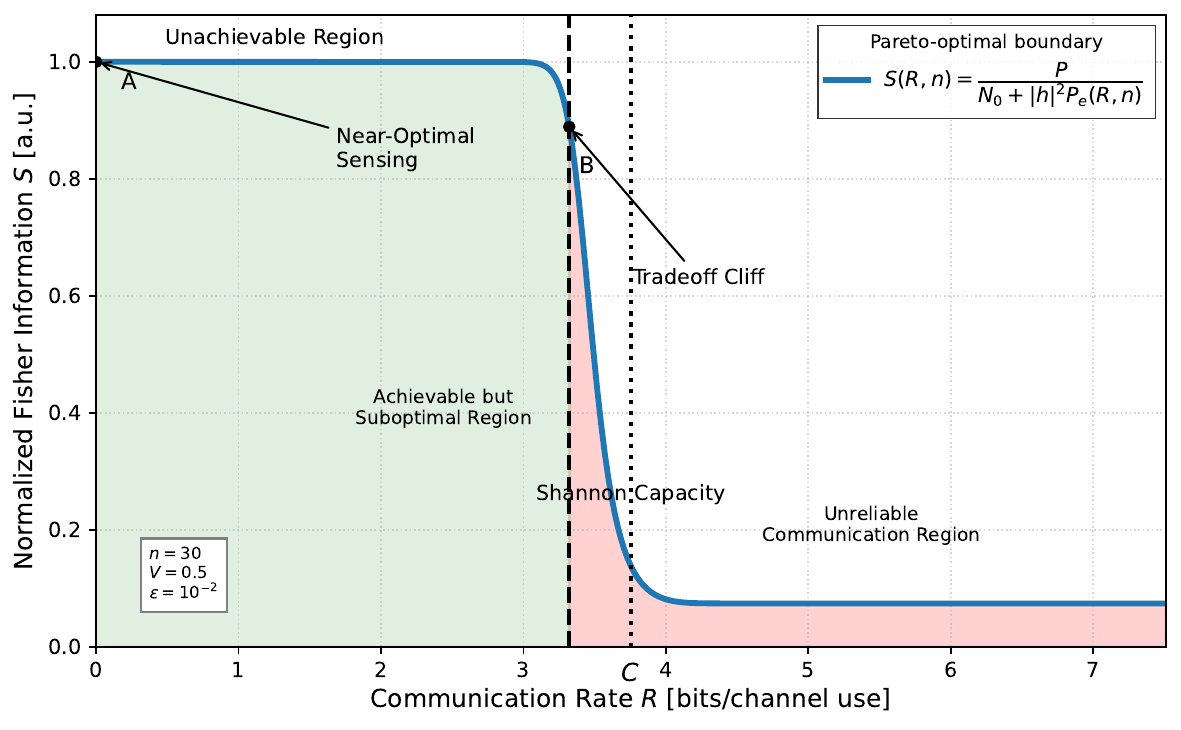}
    \caption{\textbf{Pareto Boundary of ISAC Systems.} The fundamental tradeoff between the rate $R$ and sensing performance $S$. At low rates (Point A), sensing is near-optimal. As $R \to C$ (Point B), FBL effects cause a ``tradeoff cliff'' where sensing drops precipitously.}
    \label{fig:pareto_tradeoff}
\end{figure}

As illustrated in Fig. \ref{fig:pareto_tradeoff}, the system operates in three distinct regimes:
\begin{enumerate}
    \item \textbf{Near-Optimal Sensing ($R \ll C$):} Highly reliable communication ($P_e \approx 0$) yields negligible sensing uncertainty, allowing precision near the theoretical limit $P/N_0$.
    \item \textbf{Tradeoff Cliff ($R \to C$):} The FBL regime fails to suppress errors which results in escalating sensing noise.
    \item \textbf{Unreliable Communication ($R > C$):} High data outage yields heavily distorted references, providing inadequate sensing.
\end{enumerate}

\subsection{Tradeoff Cliff Threshold}
\begin{lemma}[Tradeoff Cliff Threshold]\label{prop:cliff_rate}
For a target decoding error probability $\epsilon \in (0,1)$, the operational communication rate marking the Tradeoff Cliff is:
\begin{equation}
R_{\text{cliff}} = C - \sqrt{\frac{V}{n}} \frac{Q^{-1}(\epsilon)}{\ln 2}.
\end{equation}
\end{lemma}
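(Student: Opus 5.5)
The plan is to treat the cliff rate as the rate at which the finite-blocklength reliability constraint becomes active, and to obtain it by inverting the normal approximation for $P_e(R,n)$ from Section~\ref{sec2}. I would first define the Tradeoff Cliff operationally. Given a target decoding error probability $\epsilon$, the cliff is the largest rate $R$ with $P_e(R,n)\le\epsilon$. Equivalently, it is the rate at which $\sigma_e^2(R,n)=\kappa P_e(R,n)$ reaches the tolerated level $\kappa\epsilon$. By Lemma~\ref{thm2}, this is also the point where $S(R,n)$ falls to $P/(N_0+\xi\kappa\epsilon)$. Since Lemma~\ref{thm2} shows that $S$ is strictly decreasing in $R$, this threshold is well defined. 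For every $R<R_{\text{cliff}}$ the sensing metric stays above this level, and for every $R>R_{\text{cliff}}$ it falls below it.

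The core step is a direct inversion. Set
\[
Q\!\left(\sqrt{\tfrac{n}{V}}\,(C-R)\ln 2\right)=\epsilon .
\]
The map $Q$ is a continuous, strictly decreasing bijection from $\mathbb{R}$ onto $(0,1)$. Hence for each $\epsilon\in(0,1)$ there is a unique solution, given by $\sqrt{n/V}\,(C-R)\ln2 = Q^{-1}(\epsilon)$. Rearranging gives $R = C-\sqrt{V/n}\,Q^{-1}(\epsilon)/\ln 2$. Because $P_e$ is increasing in $R$, the set $\{R: P_e(R,n)\le\epsilon\}$ is exactly $(-\infty,R_{\text{cliff}}]$. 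This confirms that the formula marks the boundary.

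I would then justify calling this point the \emph{cliff}, meaning the location of the sharp transition. I would compute $\partial P_e/\partial R=\sqrt{n/V}\,\ln2\,\phi\!\left(\sqrt{n/V}(C-R)\ln 2\right)$, where $\phi$ is the standard normal density. This slope grows like $\sqrt{n}$ inside a window of width $\Theta(\sqrt{V/n})$ around $R_{\text{cliff}}$ and is negligible outside it. Through the chain rule used in Lemma~\ref{thm2}, $\partial S/\partial R$ inherits this localized steepness. So the transition of $S$ from near $P/N_0$ to its degraded level is concentrated near $R_{\text{cliff}}$. For $\epsilon<1/2$, $Q^{-1}(\epsilon)>0$ and $R_{\text{cliff}}<C$, which gives the $C-R=\Theta(n^{-1/2})$ backoff.

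The main obstacle is not the algebra but the notion of ``marking the cliff.'' The result depends on the chosen reliability target $\epsilon$ and relies on the normal approximation to $P_e$, which neglects the $O(\log n/n)$ terms. I would state explicitly that the threshold is defined relative to $\epsilon$ and holds up to these higher-order terms. I would also note that adding the third-order term $\tfrac{\log_2 n}{2n}$ only shifts $R_{\text{cliff}}$ by $o(n^{-1/2})$.
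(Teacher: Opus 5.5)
Your proposal is correct and follows essentially the same route as the paper. The paper likewise defines the cliff by $P_e(R_{\text{cliff}},n)=\epsilon$, substitutes the normal approximation, inverts $Q$, and appeals to the monotonicity of $P_e$ in $R$. Your remarks on $Q$ being a bijection onto $(0,1)$ and on the localized $\Theta(\sqrt{n})$ slope are supplementary rigor; the paper omits them here and treats the slope scaling separately in Theorem~\ref{thm4}.
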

\begin{proof}
The cliff $R_{\text{cliff}}$ is operationally defined where $P_e(R_{\text{cliff}}, n) = \epsilon$. Substituting the FBL normal approximation yields $\epsilon = Q\left( \sqrt{n/V} (C - R_{\text{cliff}})\ln 2 \right)$. Applying the inverse $Q$-function and solving for $R_{\text{cliff}}$ yields the result. Because $P_e(R,n)$ increases monotonically with $R$, rates beyond $R_{\text{cliff}}$ suffer from sharp sensing degradation.
\end{proof}

This expression shows that  the cliff occurs at a capacity margin scaling formulated as $\mathcal{O}(1/\sqrt{n})$. Shorter blocklengths widen this gap, causing earlier sensing degradation.

\subsection{Cliff Curvature and Sharpness Scaling}\label{sub4c}
To characterize the tradeoff's geometry, we evaluate the derivatives of $S(R,n)$ with respect to $R$. Let $a=\sqrt{n/V}\ln2$.

\begin{theorem}[Curvature and Sharpness]\label{thm4}
The first and second derivatives of the sensing performance are:
\begin{align}
    \frac{dS}{dR} &= -\frac{P\xi\kappa a}{\sqrt{2\pi}} \frac{e^{-\frac{a^2(C-R)^2}{2}}}{\left(N_0+\xi\kappa P_e(R,n)\right)^2}, \\
    \frac{d^2S}{dR^2} &= \textcolor{black}{-} \frac{P\xi\kappa a^3(C-R)}{\sqrt{2\pi}} \frac{e^{-\frac{a^2(C-R)^2}{2}}}{\left(N_0+\xi\kappa P_e(R,n)\right)^2} \nonumber\\
    &~~~~~\textcolor{black}{+} \frac{P(\xi\kappa)^2a^2}{\pi} \frac{e^{-a^2(C-R)^2}}{\left(N_0+\xi\kappa P_e(R,n)\right)^3}. \nonumber
\end{align}
Consequently, the tradeoff exhibits an inflection region near $R\approx C$. The maximum slope gradient scales as $|\frac{dS}{dR}|_{\max} = \mathcal{O}(\sqrt{n})$, implying that the cliff approaches a step-like transition at $C$ as $n\to\infty$.
\end{theorem}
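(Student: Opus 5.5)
The plan is to differentiate the Pareto boundary of Lemma~\ref{thm2} directly. Set $g(R)=N_0+\xi\kappa P_e(R,n)$ with $P_e(R,n)=Q\bigl(a(C-R)\bigr)$ and $a=\sqrt{n/V}\ln 2$, so that $S=P/g$. The only ingredient I need is the Gaussian identity $Q'(t)=-\phi(t)=-e^{-t^2/2}/\sqrt{2\pi}$. Combined with the chain rule through the inner map $t=a(C-R)$, whose derivative is $-a$, it gives
\begin{equation*}
\frac{\partial P_e}{\partial R}=\frac{a}{\sqrt{2\pi}}e^{-a^2(C-R)^2/2}>0 .
\end{equation*}
Then $dS/dR=-P g'/g^2$ with $g'=\xi\kappa\,\partial P_e/\partial R$, which is exactly the stated first derivative. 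It is also consistent with the sign argument in Lemma~\ref{thm2}.

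For the second derivative I will use the quotient rule in the form $S''=-P\,g''/g^2+2P\,(g')^2/g^3$. Differentiating the Gaussian factor gives
\begin{equation*}
\frac{d}{dR}\,e^{-a^2(C-R)^2/2}=a^2(C-R)\,e^{-a^2(C-R)^2/2},
\end{equation*}
so $g''=\xi\kappa a^3(C-R)e^{-a^2(C-R)^2/2}/\sqrt{2\pi}$. Substituting produces the first term, $-P\xi\kappa a^3(C-R)e^{-a^2(C-R)^2/2}/(\sqrt{2\pi}g^2)$. For the second term,
\begin{equation*}
2P\frac{(g')^2}{g^3}=\frac{2P(\xi\kappa)^2a^2}{2\pi}\,\frac{e^{-a^2(C-R)^2}}{g^3}=\frac{P(\xi\kappa)^2a^2}{\pi}\,\frac{e^{-a^2(C-R)^2}}{g^3}.
\end{equation*}
Both terms then match the displayed formula. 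I expect this bookkeeping of signs and factors to be the only place where care is needed in the calculus part.

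For the inflection claim, I will note the following sign pattern.
\begin{itemize}
\item At $R=C$ the first term of $S''$ vanishes and the second is strictly positive, so $S''(C)>0$.
\item For $R<C$ with $a(C-R)$ of order one or larger, the first term dominates and is negative. The second term carries the extra factor $\xi\kappa\,\phi(a(C-R))/g$, which is small in the moderate-uncertainty regime $\xi\kappa\ll N_0$ assumed earlier.
\item By continuity, $S''$ changes sign at some $R^\star$ with $a(C-R^\star)=O(1)$, i.e.\ $C-R^\star=O(1/a)$. This places the inflection point within $O(\sqrt{V/n})$ of $C$.
\end{itemize}
I expect this sign-change argument to be the main obstacle. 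To locate the zero cleanly I would first try the small-$\xi\kappa/N_0$ expansion, which puts $R^\star\approx C$ at leading order. If that is too crude, I would bound the second term uniformly using $\phi\le 1/\sqrt{2\pi}$ and $g\ge N_0$.

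For the sharpness scaling, I bound $|dS/dR|$ using $g\ge N_0$ and $e^{-(\cdot)}\le 1$:
\begin{equation*}
|dS/dR|\le \frac{P\xi\kappa a}{\sqrt{2\pi}N_0^2}.
\end{equation*}
This bound is attained up to constants at $R=C$, where $g=N_0+\xi\kappa/2$. Hence $|dS/dR|_{\max}=\Theta(a)=\Theta(\sqrt{n})$. The boundary values $S(C-\delta)\to P/N_0$ and $S(C+\delta)\to P/(N_0+\xi\kappa)$ as $n\to\infty$, for fixed $\delta>0$, show that the transition width shrinks like $n^{-1/2}$ while its height stays fixed. The curve therefore converges pointwise, away from $R=C$, to a step at $C$.
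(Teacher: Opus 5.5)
Your proposal is correct and follows the paper's route: you substitute $P_e=Q(a(C-R))$ into $S=P/g$, differentiate using $Q'=-\phi$ and the quotient rule, and read off the $\sqrt{n}$ scaling from $a=\sqrt{n/V}\ln 2$; your two-sided bound on $|dS/dR|$ (a uniform upper bound via $g\ge N_0$, matched up to an $n$-independent constant at $R=C$) and your intermediate-value argument for the inflection point simply make explicit what the paper only asserts, and they give the sharper $\Theta(\sqrt{n})$. For the inflection step, use your uniform fallback directly. It places the sign change at $a(C-R^\star)\le 2\xi\kappa/(\sqrt{2\pi}N_0)$ without any assumption on $\xi\kappa/N_0$, so the small-$\xi\kappa/N_0$ expansion is unnecessary, which matters because the paper's own numerics use $\xi\kappa=10N_0$.
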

\begin{proof}
Substituting $P_e(R,n) = Q\left[a(C-R)\right]$ into $S(R,n)$ and applying standard differentiation rules yields the derivatives. Since $a \propto \sqrt{n}$, the slope's gradient is directly proportional to $\sqrt{n}$, confirming the sharpening of the cliff for higher blocklengths.
\end{proof}
\textbf{Remark:}
The dominant contribution arises near the reliability transition
region where the Gaussian density term associated with
\(Q'(x)\) attains its maximum magnitude at \(x \approx 0\),
corresponding to \(R \approx C\). Consequently,
\[
\left|\frac{dS}{dR}\right|_{\max}
\propto
a
=
\sqrt{\frac{n}{V}}\ln 2,
\]
which quantifies the \(O(\sqrt{n})\) cliff sharpness.
\subsection{Safe Operating Margin Scaling}
We next quantify the required margin below capacity to maintain stable sensing.

\begin{theorem}[Safe Operating Margin]\label{thm_safe_margin}
To guarantee a minimum sensing performance $S(R,n) \geq S_{\mathrm{th}}$, the communication rate must not exceed $R_{\mathrm{safe}} = C-\Delta_n$, where the reliability margin $\Delta_n$ is:
\begin{equation}
\Delta_n = \frac{\sqrt{V}}{\sqrt{n}\ln2} \,Q^{-1}\!\left( \frac{1}{\xi\kappa} \left( \frac{P}{S_{\mathrm{th}}}-N_0 \right) \right) = \Theta\!\left(\frac{1}{\sqrt n}\right).
\label{eq:safe_margin}
\end{equation}
\end{theorem}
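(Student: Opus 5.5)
The plan is to invert the Pareto boundary of Lemma~\ref{thm2} directly, and then to use the monotonicity already shown there to turn the inversion into a threshold condition on $R$. Starting from $S(R,n)=P/(N_0+\xi\kappa P_e(R,n))$, the requirement $S(R,n)\ge S_{\mathrm{th}}$ is equivalent (all quantities positive) to
\begin{equation*}
N_0+\xi\kappa P_e(R,n)\le \frac{P}{S_{\mathrm{th}}},
\qquad\text{i.e.}\qquad
P_e(R,n)\le \epsilon_{\mathrm{th}}:=\frac{1}{\xi\kappa}\left(\frac{P}{S_{\mathrm{th}}}-N_0\right).
\end{equation*}
This reduces the sensing constraint to a reliability constraint, and the proof then mirrors Lemma~\ref{prop:cliff_rate} with $\epsilon=\epsilon_{\mathrm{th}}$.

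Next I will substitute the normal approximation $P_e(R,n)=Q(a(C-R))$ with $a=\sqrt{n/V}\ln 2$. Since $Q$ is strictly decreasing, $Q(a(C-R))\le\epsilon_{\mathrm{th}}$ is equivalent to $a(C-R)\ge Q^{-1}(\epsilon_{\mathrm{th}})$. This gives $R\le C-Q^{-1}(\epsilon_{\mathrm{th}})/a=C-\Delta_n$, which is exactly \eqref{eq:safe_margin}. Because $\partial S/\partial R<0$ by Lemma~\ref{thm2}, the feasible set is the interval $R\le R_{\mathrm{safe}}$, so $R_{\mathrm{safe}}$ is the largest admissible rate.

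The step that needs care is the well-posedness of $Q^{-1}(\epsilon_{\mathrm{th}})$, which requires $\epsilon_{\mathrm{th}}\in(0,1)$. I will handle this by cases.
\begin{itemize}
\item If $S_{\mathrm{th}}\ge P/N_0$, then $\epsilon_{\mathrm{th}}\le 0$ and no finite rate achieves the threshold.
\item If $\epsilon_{\mathrm{th}}\ge 1$, the constraint is vacuous.
\end{itemize}
So I will state the result under the standing assumption $P/(N_0+\xi\kappa)<S_{\mathrm{th}}<P/N_0$. The sign of $\Delta_n$ then follows that of $Q^{-1}(\epsilon_{\mathrm{th}})$, and it is positive whenever $\epsilon_{\mathrm{th}}<1/2$.

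Finally, for the scaling claim, I note that $\epsilon_{\mathrm{th}}$ depends only on $(P,N_0,\xi,\kappa,S_{\mathrm{th}})$ and not on $n$. With $C$ and $V$ fixed by the channel and SNR, $\Delta_n=c/\sqrt{n}$ with the constant $c=\sqrt V\,Q^{-1}(\epsilon_{\mathrm{th}})/\ln 2$, which is nonzero whenever $\epsilon_{\mathrm{th}}\ne 1/2$. This gives $\Theta(n^{-1/2})$, consistent with the $\mathcal{O}(1/\sqrt{n})$ margin observed after Lemma~\ref{prop:cliff_rate}.
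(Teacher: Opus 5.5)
Your proposal is correct and follows the same route as the paper. Both rewrite $S(R,n)\ge S_{\mathrm{th}}$ as $P_e(R,n)\le \frac{1}{\xi\kappa}\bigl(\frac{P}{S_{\mathrm{th}}}-N_0\bigr)$, substitute the normal approximation, invert the decreasing $Q$-function and read off the $n^{-1/2}$ dependence; your added conditions ($\epsilon_{\mathrm{th}}\in(0,1)$ so that $Q^{-1}$ is defined, and $\epsilon_{\mathrm{th}}\neq 1/2$ so that the $\Theta(n^{-1/2})$ claim is not degenerate) are legitimate refinements that the paper's proof leaves implicit.
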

\begin{proof}
The constraint $S(R,n) \geq S_{\mathrm{th}}$ requires $P_e(R,n) \leq \frac{1}{\xi\kappa}\left(\frac{P}{S_{\mathrm{th}}}-N_0\right)$. Substituting the FBL approximation, applying the inverse $Q$-function, and isolating $\Delta_n = C - R_{\mathrm{safe}}$ yields \eqref{eq:safe_margin}. As the other terms are independent of $n$, the margin strictly scales as $\Theta(n^{-1/2})$.
\end{proof}

\textbf{Remark:} Short-packet ISAC systems have a reliability margin scaling as $\mathcal{O}(1/\sqrt{n})$ below capacity to avoid severe sensing degradation, establishing them as being fundamentally reliability-limited.

\subsection{Special Cases}

\textbf{The Ideal Case ($P_e = 0$):} Under error-free decoding, uncertainty vanishes ($\sigma_e^2 = 0$). Sensing performance hits a rate-independent baseline, $S_{\text{ideal}} = P/N_0$. 

\begin{lemma}[Exponential Uncertainty Decay]
If uncertainty decays exponentially with rate (e.g., in source-coding frameworks) such that $\sigma_e^2(R) = \kappa 2^{-R}$, then $S(R) = P/(N_0 + \xi \kappa 2^{-R})$.
\end{lemma}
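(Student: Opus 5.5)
The plan is to obtain this as a direct specialization of Lemma~\ref{thm2}, replacing the finite-blocklength variance model by the exponential one. The only model-specific ingredient in Lemma~\ref{thm2} is $\sigma_e^2(R,n)$; the equivalent-noise construction of Section~\ref{sec3} does not depend on how that variance arises.

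The key steps, in order, are as follows.

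\textbf{Step 1.} Recall that under the decoded-reference model $\hat{x}=x+e$ with $e\sim\mathcal{CN}(0,\sigma_e^2)$, the effective noise $\tilde w=w-he$ has variance $N_0+|h|^2\sigma_e^2$. This uses only independence of $e$ from $x$ and $w$, not the specific form of $\sigma_e^2$.

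\textbf{Step 2.} Invoke the high-SNR simplification following Theorem~1, $S\approx P/(N_0+\xi\sigma_e^2)$ with $\xi=|h|^2$. This step requires $\sigma_e^2\ll P$. Here $\sigma_e^2=\kappa 2^{-R}\le\kappa$ for $R\ge 0$, so the condition holds whenever $\kappa\ll P$. Otherwise it holds for all rates above some modest threshold, since $2^{-R}$ decays.

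\textbf{Step 3.} Substitute $\sigma_e^2(R)=\kappa 2^{-R}$ into the expression of Lemma~\ref{thm2}, which gives
\[
S(R)=\frac{P}{N_0+\xi\kappa 2^{-R}}.
\]
Optionally, I would also note the contrast with Lemma~\ref{thm2}. Differentiating gives
\[
\frac{dS}{dR}=\frac{P\xi\kappa\ln 2\;2^{-R}}{(N_0+\xi\kappa 2^{-R})^2}>0,
\]
so here $S$ increases with rate, the opposite of the finite-blocklength case. The function is logistic-like and saturates at $P/N_0$, so there is no cliff.

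\textbf{Main obstacle.} There is no real technical obstacle, since the result is a substitution. The only point needing care is justifying that the Gaussian equivalent-noise approximation and the neglect of the variance-gradient term in \eqref{eq8} remain valid. Those terms were derived for decoding-induced distortion, whereas here $e$ comes from a source-coding distortion model. I would handle this by noting that Appendix~\ref{app:model} uses only second-order statistics of $e$ and its independence from $x$ and $w$. I would then state explicitly that the exponential model is assumed to satisfy these same conditions.
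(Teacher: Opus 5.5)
Your proposal is correct and takes the same approach as the paper, which gives no separate argument: the lemma follows by substituting $\sigma_e^2(R)=\kappa 2^{-R}$ into the boundary expression \eqref{eq12} of Lemma~\ref{thm2}. Your derivative computation, giving $dS/dR>0$ with saturation at $P/N_0$, also agrees with the paper's subsequent remark that the exponential model curves upward rather than exhibiting a cliff.
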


\begin{figure}[htbp]
    \centering
    \includegraphics[width=0.95\linewidth]{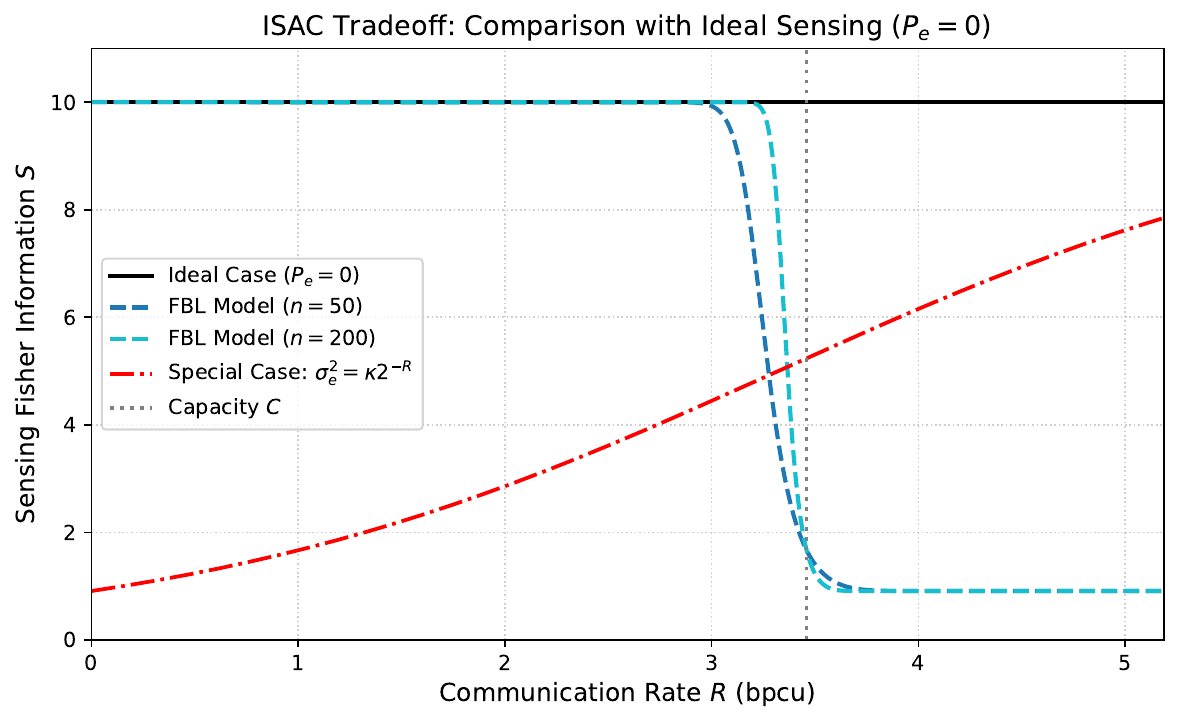}
    \caption{Sensing performance $S$ vs. rate $R$, comparing the ideal baseline ($P_e=0$), the exponential special case, and the FBL model across various blocklengths $n$.}
    \label{fig:special_case_plot}
\end{figure}

\textbf{Remark:} As seen in Fig. \ref{fig:special_case_plot}, the exponential model curves upward, asymptotically saturating at $P/N_0$ as $R \to \infty$. This suits source-limited scenarios, where throughput refines waveform knowledge. Conversely, the FBL model captures the physical reality of noise-limited channels: it maintains near-ideal performance before suffering a rapid collapse at the capacity limit, acting increasingly like a step function for higher blocklengths.
%%%%%%%%%%%%%%%%%%%%%%%%%%%%%%**********%%%%%%%%%%%%%%%%%%%%%%%%%%%%%%%%%%%%%%%%%%%
%%%%%%%%%%%%%%%%%%%%%%%%%%%%%%%%%%%%%%%%%%%%%%%%%%%%%%%%%%%%%%%%%%%%%%%%%%%%%%%%%%%%
\section{Reliability-Aware Operating Point Design}
\label{sec6}

The Tradeoff Cliff characterized in Section~\ref{sec4} fundamentally changes the character of ISAC systems operating under FBL communication constraints. In particular, sensing performance becomes highly sensitive to communication-rate selection near the finite blocklength reliability boundary, where decoding-induced covariance inflation grows rapidly.

In contrast to  conventional ISAC formulations that primarily interpret sensing--communication coupling through explicit resource partitioning, the proposed framework reveals an intrinsic reliability-induced operating constraint: the communication waveform itself becomes an increasingly unreliable sensing reference as the communication rate approaches capacity.

This section characterizes the resultant reliability-aware operating region and investigates the impact of communication Quality-of-Service (QoS) constraints on the achievable sensing performance.

\subsection{Reliability-Constrained Operating Region}

From Theorem~\ref{thm_safe_margin}, maintaining a minimum sensing performance level of 
\[
S(R,n)\geq S_{\mathrm{th}}
\]
requires the communication rate to satisfy
\begin{equation}
R
\leq
R_{\mathrm{safe}}
=
C-\Delta_n,
\label{eq:reliable_region}
\end{equation}
where
\begin{equation}
\Delta_n
=
\frac{\sqrt{V}}{\sqrt{n}\ln2}
Q^{-1}\!\left(
\frac{1}{\xi\kappa}
\left(
\frac{P}{S_{\mathrm{th}}}-N_0
\right)
\right)
\label{eq:delta_safe}
\end{equation}
denotes the finite blocklength reliability margin.

Equation~\eqref{eq:reliable_region} establishes that the feasible sensing region is fundamentally reliability-limited. As the communication rate approaches capacity, the decoding uncertainty increases sharply, inflating the equivalent sensing covariance and causing rapid Fisher Information degradation.

Consequently, reliable sensing operation requires maintaining a certain operating margin below the communication reliability bound.

\subsection{Operating Geometry Near the Tradeoff Cliff}

The curvature analysis developed in Theorem~\ref{thm4} shows that the sensing-performance surface becomes increasingly ill-conditioned near the Tradeoff Cliff. In particular, having
\[
\left|
\frac{dS}{dR}
\right|_{\max}
=
\mathcal{O}(\sqrt n),
\]
implies that the sensing sensitivity to communication-rate variations grows with the communication blocklength.

Accordingly, the operating region may be divided into three distinct regimes:

\begin{itemize}

\item \textbf{Reliable Regime $(R\ll C-\Delta_n)$:}

Decoding uncertainty remains negligible, and sensing performance approaches the ideal covariance-limited behavior
\[
S
\approx
\frac{P}{N_0}.
\]

\item \textbf{Transition Regime $(R\approx C-\Delta_n)$:}

The finite blocklength reliability transition induces rapid covariance inflation, causing the Fisher Information to become highly sensitive to communication-rate variations.

\item \textbf{Reliability-Limited Regime $(R\to C)$:}

The decoded waveform becomes an unreliable sensing reference due to excessive decoding uncertainty, producing severe sensing-information degradation.

\end{itemize}

These regimes collectively define the reliability-aware operating region  of decoded-reference based  data-aided ISAC systems.

\subsection{QoS-Constrained Feasible Region}

We next investigate the impact of communication QoS constraints on the achievable sensing performance.

Consider the sensing optimization problem of
\begin{align}
&\min_{R,P}
\quad
\mathrm{CRB}(R,P)
=
\frac{
N_0+\xi\kappa P_e(R,n)
}{P}
\nonumber\\
&\text{s.t.}
\quad
R\geq R_{\min},
\qquad
0\leq P\leq P_{\max}.
\label{eq:qos_problem}
\end{align}

\begin{Corollary}[Reliability-Constrained Boundary Solution]
\label{cor_boundary}
The sensing-optimal operating point satisfying \eqref{eq:qos_problem} is
\[
R^*=R_{\min},
\qquad
P^*=P_{\max}.
\]
\end{Corollary}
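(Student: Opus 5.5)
The objective in \eqref{eq:qos_problem} is a product of a factor depending only on $R$ and a factor depending only on $P$:
\[
\mathrm{CRB}(R,P)=g(R)\cdot\frac{1}{P},\qquad g(R)=N_0+\xi\kappa P_e(R,n).
\]
The plan is to exploit this separability and minimize over each variable on its own feasible set. The constraint set $\{R\ge R_{\min}\}\times\{0\le P\le P_{\max}\}$ is a Cartesian product, so no constraint couples $R$ and $P$. Because $g(R)>0$ and $1/P>0$, minimizing the product is therefore equivalent to minimizing $g$ over $R\ge R_{\min}$ and minimizing $1/P$ over $(0,P_{\max}]$ separately.

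\textbf{Power.} For any fixed $R$ the objective is strictly decreasing in $P$, since $\partial\,\mathrm{CRB}/\partial P=-g(R)/P^2<0$. The point $P=0$ is excluded because the CRB is infinite there, as no sensing information is collected. Hence the minimum over $P$ is attained at the upper boundary, $P^*=P_{\max}$, independently of $R$.

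\textbf{Rate.} Here I would invoke the monotonicity established in the proof of Lemma~\ref{thm2}. Under the normal approximation $P_e(R,n)=Q\!\left(a(C-R)\right)$ with $a=\sqrt{n/V}\ln 2$, the derivative is
\[
\frac{\partial P_e}{\partial R}=\frac{a}{\sqrt{2\pi}}\,e^{-a^2(C-R)^2/2}>0 .
\]
This is also the derivative computation of Theorem~\ref{thm4}, read with the sign flipped because $\mathrm{CRB}\propto 1/S$. It gives $g'(R)=\xi\kappa\,\partial P_e/\partial R>0$, using $\xi=|h|^2>0$ and $\kappa>0$, so $g$ is strictly increasing on all of $\mathbb{R}$. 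The feasible set $[R_{\min},\infty)$ is closed from below, so the infimum of $g$ over it is attained uniquely at $R^*=R_{\min}$. Combining the two steps gives the unique minimizer $(R_{\min},P_{\max})$. Its optimal value is $\mathrm{CRB}^*=\bigl(N_0+\xi\kappa Q(a(C-R_{\min}))\bigr)/P_{\max}$.

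\textbf{Main obstacle.} The only real subtlety is justifying the decoupling. Taken literally, the model makes $C$ and $V$ depend on SNR and hence on $P$, which would make $P_e$ depend on $P$ too. I would address this first, in one of two ways. One option is to state explicitly that \eqref{eq:qos_problem} treats $C,V$ as fixed, as the formulation does. The other is to note that in the AWGN model $C$ increases with $P$ and the normal-approximation argument $(C-R)\sqrt{n/V}$ also increases with $P$ in the relevant regime. In that case $P_e$ is non-increasing in $P$, and $P^*=P_{\max}$ still holds, now with both factors improving. Either way the boundary solution is unchanged.

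A final remark would note the interpretation. The QoS rate constraint is always active, meaning the system operates as far from the Tradeoff Cliff as the QoS allows. Feasibility with a target $S_{\mathrm{th}}$ then requires $R_{\min}\le R_{\mathrm{safe}}$ from Theorem~\ref{thm_safe_margin}.
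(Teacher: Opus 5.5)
Your proposal is correct and follows the paper's own proof, which simply notes that the CRB increases with $R$ through $P_e(R,n)$ and decreases with $P$, so the optimum lies at the corner $(R_{\min},P_{\max})$. Your additions are the explicit derivative of $Q\bigl(a(C-R)\bigr)$, the uniqueness claim, and the observation that letting $C$ and $V$ depend on $P$ only reinforces $P^*=P_{\max}$; these elaborate the same argument rather than taking a different route, and coordinatewise monotonicity on the product constraint set already suffices, so separability is a convenience rather than a requirement.
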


\begin{proof}
The CRB increases monotonically with the communication rate through the decoding error probability $P_e(R,n)$ and decreases monotonically with the transmit power $P$. Consequently, minimizing the sensing error requires operating at the lowest feasible communication rate and highest feasible transmit power.
\end{proof}

Corollary~\ref{cor_boundary} reveals that the achievable sensing performance is fundamentally constrained by the minimum communication QoS requirement. As $R_{\min}$ approaches the finite blocklength reliability boundary, the feasible sensing region contracts rapidly due to decoding-induced covariance inflation.

In particular, if
\[
R_{\min}
\geq
R_{\mathrm{safe}},
\]
then the system is forced to operate inside the Tradeoff Cliff -- transition region, where small increases in communication throughput imposes disproportionately high sensing degradation.

\subsection{Reliability-Aware Feasible Operating Region}

\begin{figure}[htbp]
    \centering
    \includegraphics[width=0.9\linewidth]{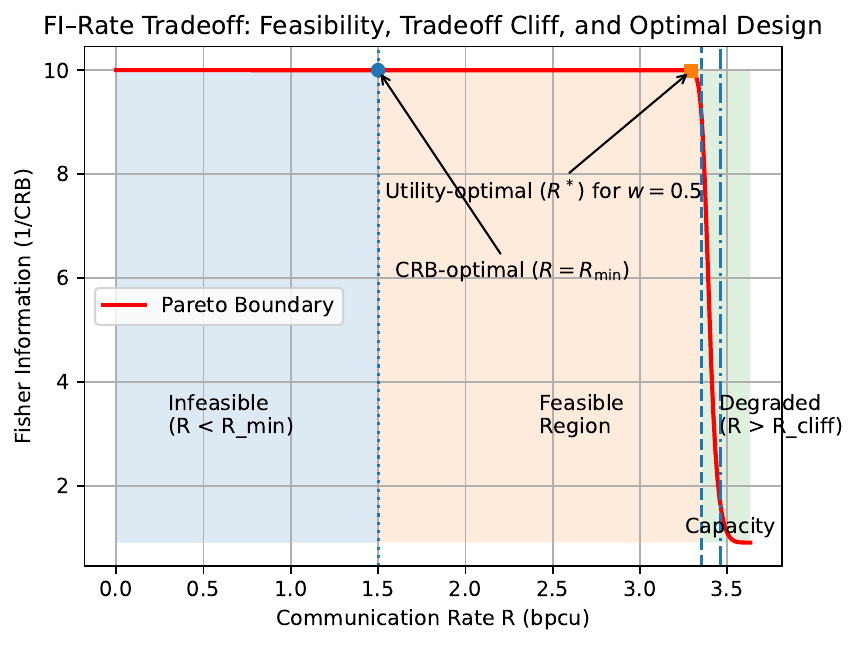}
    \caption{Operating region under decoding uncertainty. $R^*$ balances throughput and fidelity below $R_{\mathrm{cliff}}$, while $R_{\min}$ sets the CRB-optimal boundary. Parameters: $P=1, N_0=0.1, n=500, V=0.5, \kappa=1, \epsilon=10^{-2}$.}
    \label{fig:optimization_region}
\end{figure}
The resultant operating region is illustrated in Fig.~\ref{fig:optimization_region}, where the feasible sensing region is determined jointly by:
\begin{itemize}
\item the finite blocklength reliability boundary,
\item the communication QoS requirement,
\item and the sensing covariance constraint.
\end{itemize}

Several important observations follow:

\begin{itemize}

\item \textbf{Reliability Boundary:}
The Tradeoff Cliff acts as a practical sensing bound separating stable and reliability-limited operating regions.

\item \textbf{Reliability Margin:}
Maintaining a sufficiently high margin relative to capacity substantially improves sensing performance by suppressing decoding-induced covariance inflation.

\item \textbf{QoS-Driven Feasibility Loss:}
Aggressive near-capacity communication-rate requirements progressively shrink the feasible sensing region and may force operation into the instability region near the Tradeoff Cliff.

\end{itemize}

The proposed framework therefore demonstrates that data-aided decoded-reference based ISAC systems are fundamentally reliability-constrained, even in the absence of explicit sensing--communication resource partitioning.
%%%%%%%%%%%%%%%%%%%%%%%%%%%%%%%%%%%%%%%%%%%%%%%%%%%%%%%%%%%%%%%%%%%%%%%%%%%%%%%%%
%%%%%%%%%%%%%%%%%%%%%%%%%%%%%%%%%%%%%%%%%%%%%%%%%%%%%%%%%%%%%%%%%%%%%%%%%%%%%%%%%
\section{Numerical Results and Comparative Insights}
\label{sec7}

In this section, we provide numerical simulations to validate the sensing--communication tradeoff boundary  derived and evaluate the sensitivity of the proposed framework to blocklength, SNR, and decoding reliability. Furthermore, we contrast the proposed reliability-aware ISAC model to classical resource-allocation-based formulations and identify the distinct operating regimes of FBL decoding uncertainty. 

%=========================================================
\subsection{Simulation Parameters}

The system operates at a carrier frequency of 28 GHz with a bandwidth of 100 MHz. The channel dispersion, target reflection coefficient, thermal noise floor, and maximum transmit power are set to
\[
V=0.5,
\qquad
\xi=1,
\qquad
N_0=0.1,
\qquad
P_{\max}=1,
\]
respectively, unless otherwise specified.

The communication reliability is modeled using the FBL approximation. Unless otherwise specified, the simulations use
\[
n=500,
\qquad
\epsilon=10^{-2}.
\]

The sensing performance metric is defined through the Fisher Information in \eqref{eq12} with $\kappa=1.$

Figures~\ref{fig:pareto_tradeoff}--\ref{fig:mse_crb} consider an AWGN sensing channel having a deterministic channel coefficient $h=1.$

Figures involving multi-SNR analysis use
\[
\mathrm{SNR}\in\{0,5,10,15\}\,\mathrm{dB},
\]
while the multi-blocklength analysis uses
\[
n\in\{100,300,500,1000\}.
\]

Figure~\ref{fig:opt_sensing_rmin} consider block-fading simulations under a Rayleigh fading channel model with
\[
h\sim\mathcal{CN}(0,1),
\]
where the channel envelope remains constant within each communication block and changes independently across blocks. Unless otherwise specified, the block-fading simulations use
$
n=500
$
and are averaged over
$
3\times10^5
$
independent fading realizations per operating point.  For controlled validation purposes, the  decoding uncertainty is modeled through a power-dependent distortion variance of
\[
\sigma_e^2(P)
=
\frac{\alpha}
{1+\exp\left(\beta(P-P_0)\right)},
\]
associated with
$$
\alpha=0.22,
\qquad
\beta=35,
\qquad
P_0=0.22.
$$

This model emulates the rapid improvement in decoding reliability beyond a practical operating threshold under FBL communication.

%=========================================================
\subsection{Impact of Blocklength and Reliability Transition}

\begin{figure}[htbp]
    \centering
    \includegraphics[width=0.95\linewidth]{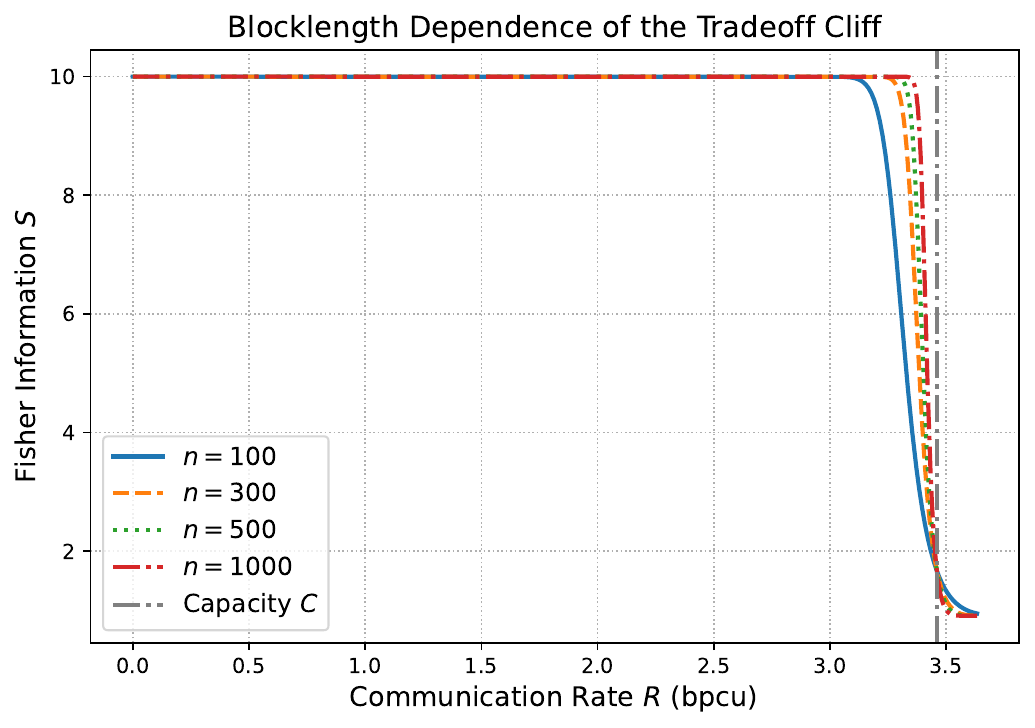}
   \caption{\textbf{Tradeoff Cliff Under Finite Blocklength Communication}: Fisher Information $S$ versus communication rate $R$ for varying blocklengths $n$. The vertical dotted line denotes the Shannon capacity $C$.}
    \label{fig:results_n}
\end{figure}

Fig.~\ref{fig:results_n} illustrates the Fisher Information $S$ as a function of communication rate $R$ for $n \in \{100,300,500,1000\}.$

The results reveal that the communication blocklength fundamentally controls the hardness of the sensing--communication tradeoff. For short packets, the finite blocklength reliability transition is gradual, causing decoding uncertainty to emerge significantly below capacity. Consequently, sensing degradation begins earlier and persists across a wider range of communication rates.

As the blocklength increases, the Tradeoff Cliff becomes progressively sharper and shifts closer to the Shannon capacity. This behavior confirms that high blocklength communication asymptotically meets the classical ISAC assumption of a perfectly known sensing reference. In particular, as
\[
n\rightarrow\infty,
\]
the decoding error probability approaches a step function and the sensing performance remains near the ideal limit of
\[
S\approx \frac{P}{N_0}
\]
for all
\[
R<C.
\]

Furthermore, Theorem~\ref{thm4} formally establishes that the cliff sharpness scales as
\[
\mathcal{O}(\sqrt{n}),
\]
which explains the increasingly abrupt transition observed for high blocklengths.

%=========================================================
\subsection{Comparison with Classical ISAC Tradeoffs}

Traditional ISAC formulations typically interpret the sensing--communication tradeoffs as a resource-allocation problem \textcolor{black}{\cite{Hassanien2016_RadarComm,Liu2021_Resource,Xiong2023_Fundamental}}. In classical power-sharing models, a splitting factor of
\[
\alpha\in[0,1]
\]
shares the transmit power between communication and sensing:
\begin{equation}
R
=
\log_2\left(
1+\frac{\alpha P}{N_0}
\right),
\qquad
S
\propto
\frac{(1-\alpha)P}{N_0}.
\end{equation}

Such formulations imply a relatively smooth and approximately linear tradeoff between communication throughput and sensing quality.

By contrast, the proposed framework reveals an intrinsic reliability-driven coupling even when both functions share the same waveform and the full transmit power budget. Here, sensing degradation emerges not from explicit resource partitioning, but from the uncertainty of the decoded communication waveform itself. As the communication rate approaches the FBL reliability boundary, the waveform becomes an increasingly unreliable sensing reference, producing the sharp Tradeoff Cliff observed in the numerical results.

This distinction highlights a fundamental departure from classical ISAC design: increasing the transmit power alone may not improve sensing performance if the communication rate is simultaneously pushed too close to capacity.

%=========================================================
\subsection{SNR Sensitivity and Capacity Alignment}

\begin{figure}[htbp]
    \centering
    \includegraphics[width=0.95\linewidth]{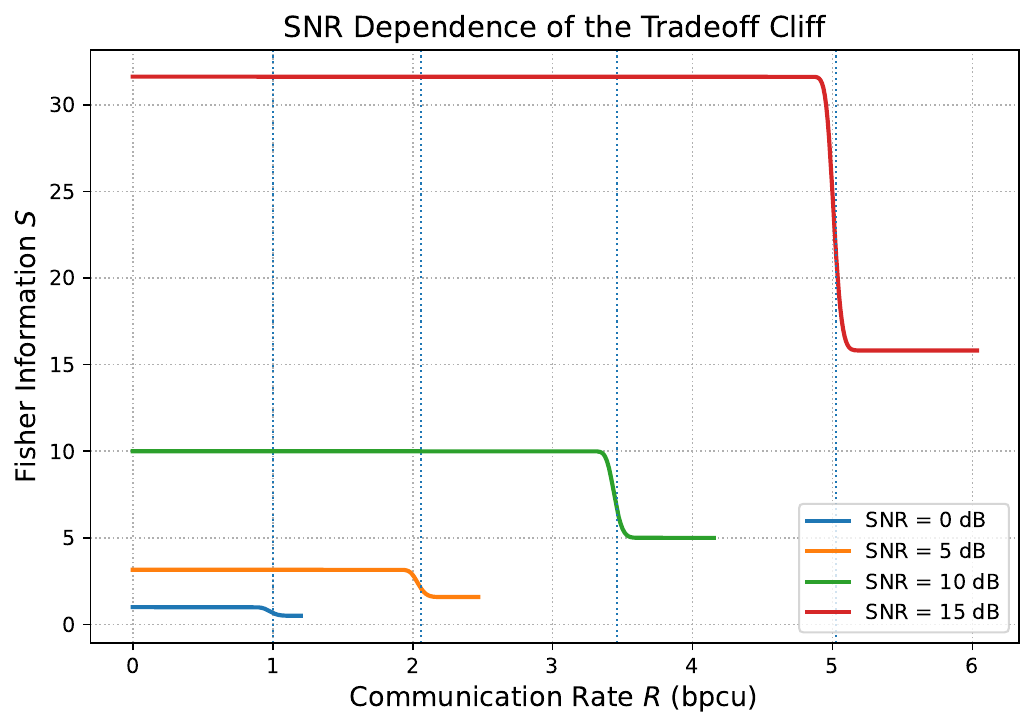}
   \caption{\textbf{SNR Effect}: Fisher Information $S$ versus communication rate $R$ for varying SNR values. The vertical dotted lines denote the corresponding Shannon capacities.}
    \label{fig:snr_sens}
\end{figure}

Fig.~\ref{fig:snr_sens} illustrates the Fisher Information across multiple SNR values.

As expected, increasing the SNR improves the maximum achievable sensing performance through the scaling
\[
S\propto \frac{P}{N_0}.
\]

However, the sensing-optimal communication region remains fundamentally constrained by the Shannon capacity. Although a higher SNR shifts both the capacity and the Tradeoff Cliff toward higher communication rates, the qualitative cliff behavior remains unchanged.

This demonstrates that the Tradeoff Cliff is fundamentally a reliability-driven phenomenon, rather than a purely power-limited effect. Even high-power sensing systems experience rapid performance collapse once the communication rate approaches the FBL reliability boundary.

%=========================================================
\subsection{Validation Under Controlled Covariance Inflation}\label{sec:validation_E}
This experiment isolates the covariance-inflation mechanism
independently of any specific coding architecture, allowing
controlled validation of the analytical Fisher Information model.
Section VI-I subsequently validates that the same behavior
emerges under practical FBL Polar decoding.

 The transmitted waveform is observed through a complex AWGN channel along with additive decoding-induced distortion. The residual decoding uncertainty is modeled as a power-dependent effective distortion variance
\[
\sigma_e^2(P)
=
\frac{\alpha}{1+\exp\left[\beta(P-P_0)\right]},
\]
which captures the rapid reduction in decoding uncertainty beyond a specific reliability transition threshold.

Unless otherwise specified, the simulations use
\[
N_0=0.1,
h=1,
\alpha=0.22,
\beta=35,
P_0=0.22,
\]
with $
3\times10^5$
Monte Carlo realizations for each transmit-power operating point.

The analytical CRB is formulated  as
\[
\mathrm{CRB}(P)
=
\frac{N_0+\sigma_e^2(P)}{P},
\]
while the empirical estimation MSE is obtained by averaging the squared channel-estimation error across all realizations.

\begin{figure}[htbp]
    \centering
    \includegraphics[width=0.95\linewidth]{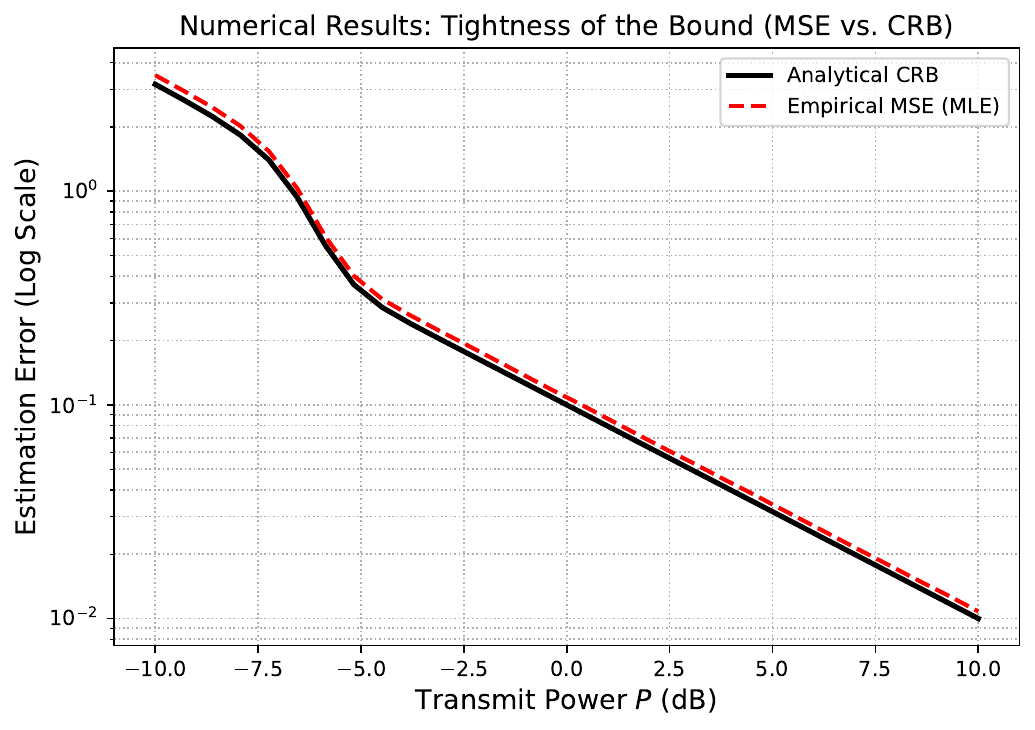}
   \caption{\textbf{CRB Validation}: Analytical CRB and empirical estimation MSE versus transmit power under a reliability-transition decoding uncertainty model. The decoding-induced distortion variance follows a sigmoid transition to emulate the rapid improvement in decoding reliability beyond a practical operating threshold. Results are averaged over \(3\times10^5\) Monte Carlo realizations.}
    \label{fig:mse_crb}
\end{figure}

Fig.~\ref{fig:mse_crb} compares the empirical estimation MSE to  the analytical CRB as a function of transmit power under decoding uncertainty.

At low transmit-power levels, the estimation error is dominated by decoding-induced distortion, resulting in a relatively slow decrease in MSE. As the transmit power increases beyond the reliability-transition threshold, the decoding uncertainty rapidly diminishes, producing the sharp transition region observed near
\[
P\approx2\times10^{-1}.
\]

Beyond this regime, both the empirical MSE and the analytical CRB exhibit the expected inverse-power scaling behavior and closely track each other over the entire operating range. The strong agreement between the analytical and empirical curves validates the proposed Fisher Information characterization and demonstrates that the effective decoding-noise framework accurately captures the impact of communication reliability on sensing performance.

%=========================================================

%=========================================================
\subsection{Impact of QoS Constraints on Sensing Precision}\label{sec:validation_G}

\begin{figure}[htbp]
    \centering
    \includegraphics[width=0.95\linewidth]{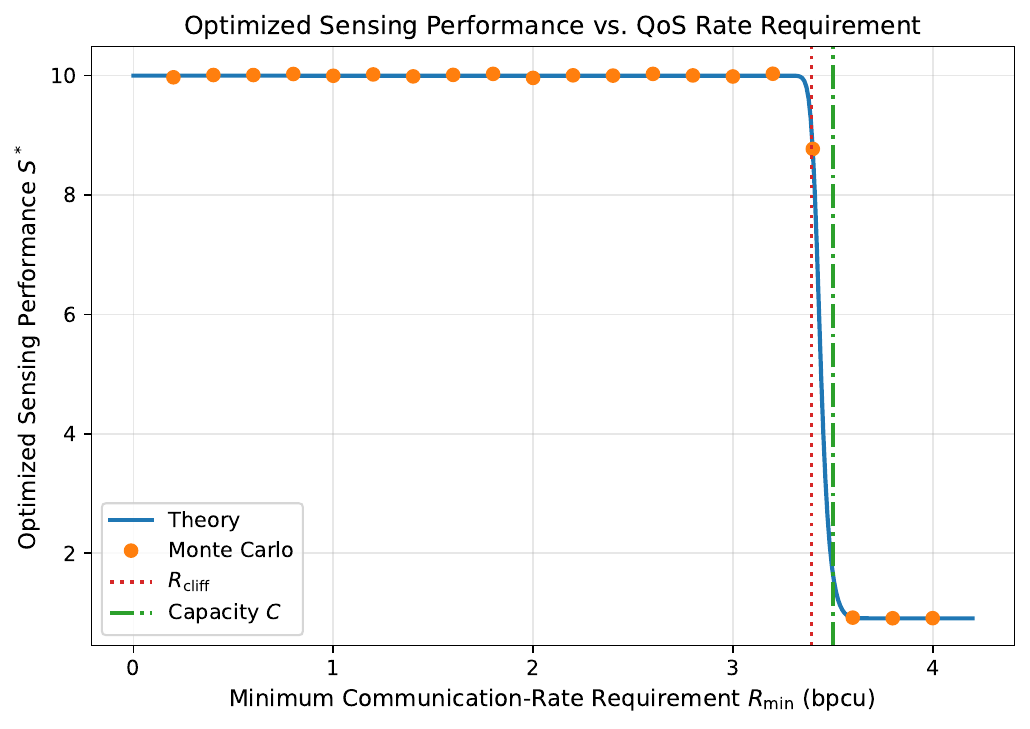}
    \caption{Maximum achievable Fisher Information $S^*$ under varying communication QoS constraints $R_{\min}$. The curve identifies the fundamental boundary of the reliability-aware ISAC feasible region.}
    \label{fig:opt_sensing_rmin}
\end{figure}

Fig.~\ref{fig:opt_sensing_rmin} illustrates the optimized sensing performance
$
S^*$
as a function of the minimum communication-rate requirement
$
R_{\min}.
$
Observe that the optimized sensing performance $S^*$
 closely follows the Fisher Information curve $S(R,n)$ as the minimum communication-rate constraint $R_{\min}$ varies, consistent with the theoretical characterization established in Corollary~\ref{cor_boundary}.

In the low-rate regime, the sensing performance remains near the ideal limit because the system maintains a sufficiently large reliability margin relative to capacity. However, as the QoS requirement approaches the FBL reliability boundary, the sensing performance undergoes a rapid collapse due to sharply increasing decoding uncertainty.
%=========================================================
\subsection{Summary of Operating Regimes}

Based on the numerical analysis, three distinct operating regimes emerge for reliability-aware ISAC systems. These regimes are summarized in Table~\ref{tab:regimes} and provide practical guidelines for balancing communication throughput and sensing fidelity.

\begin{table*}[htbp]
\centering
\caption{Classification of Reliability-Aware ISAC Operating Regimes}
\label{tab:regimes}
\begin{tabular}{@{}lp{3.5cm}p{4.5cm}@{}}
\toprule
\textbf{Regime} & \textbf{Condition} & \textbf{Performance Characterization} \\ \midrule

\textbf{Reliable} 
& 
$R_{\min}\ll C-\Delta_{n}$ 
& 
Near-ideal sensing operation with negligible decoding uncertainty. The sensing performance approaches the classical ISAC limit. 
\\ \addlinespace

\textbf{Transition} 
& 
$R_{\min}\approx C-\Delta_{n}$ 
& 
The Tradeoff Cliff emerges and sensing performance becomes highly sensitive to communication-rate variations. Small rate increases produce disproportionately large sensing degradation.
\\ \addlinespace

\textbf{Saturation} 
& 
$R_{\min}\to C$ 
& 
The communication waveform becomes an unreliable sensing reference due to high decoding uncertainty, causing sensing-information collapse.
\\ \bottomrule

\end{tabular}

\vspace{2pt}

\begin{flushleft}
\footnotesize{
Note: $\Delta_{n}$ denotes the finite blocklength reliability margin determined by the communication blocklength \(n\).
}
\end{flushleft}
\end{table*}
%%%%%%%%%%%%%%%%%%%%%%%%%%%%%%%%%%%%%%%%%%%%%%%%%%%%%%%%%%%%%%%%%%%
%-----------------------------------------------
\subsection{Polar-Coded Finite Blocklength Validation}
\label{sec:polar_validation}
To validate the covariance-based framework, we perform a practical finite-blocklength (FBL) simulation using Polar-coded short-packet communication \cite{Arikan2009, TalVardy2015,condo2021short}. In contrast to theoretical models, this experiment explicitly generates residuals $\mathbf{e} = \hat{\mathbf{x}}-\mathbf{x}$ through practical Polar encoding/decoding \textcolor{black}{over two channel models: an AWGN channel with BPSK modulation, and a Rayleigh-fading channel with coherent QPSK and receiver channel-state information (CSIR), the latter using a fixed, randomly-generated bit interleaver that spreads each codeword's coded bits across the in-phase and quadrature branches of the QPSK symbol stream so that a single deep fade does not correlate the failure of adjacent polar bit-channel positions. Both channels use $n=256$ and a rate sweep $R \in [0.1, 1.2C]$; the AWGN and Rayleigh operating SNRs (5~dB and 12~dB respectively) are chosen independently so that the two channels present comparable capacity and dispersion, allowing their finite-blocklength transitions to be compared on a like-for-like basis.}

\textcolor{black}{The AWGN construction uses the standard Gaussian-Approximation (GA) polar construction. For Rayleigh, the frozen-bit set is instead selected using a channel-matched construction: a Monte-Carlo density-evolution (MC-DE) procedure that recursively synthesizes each bit-channel's LLR distribution under the true Rayleigh/QPSK statistics, with the resulting per-subchannel reliability estimate averaged over several independent realizations to reduce Monte-Carlo estimation variance.}

\textcolor{black}{Note that this framework does not require that the residual $\mathbf{e}$ is Gaussian-distributed at the sample level. For discrete-modulation symbols, $\mathbf{e}$ is inherently discrete-valued, and its per-block distribution is a mixture of a dominant near-zero component from correctly-decoded blocks and a heavier-tailed component from decoding failures. What the framework requires is the weaker, second-moment condition that $\hat\sigma_e^2$ serves as a valid covariance-equivalent parameter for the linearized Fisher-Information analysis of Section~III, analogous to standard effective-noise-covariance treatments used elsewhere in estimation theory when the exact error distribution is analytically intractable. }

\textcolor{black}{For each rate point, the empirical block error rate (BLER) and residual covariance $\hat{\sigma}_e^2 = \frac{1}{n}\|\mathbf{e}\|^2$ are obtained from an adaptive hybrid estimator rather than a fixed number of Monte Carlo trials. A short pilot run at each rate first estimates the order of magnitude of the block error rate; direct Monte Carlo is used whenever the resulting precision target is reachable within a bounded number of direct trials, and a defensive-mixture importance-sampling estimator -- a biased noise/deep-fade proposal distribution mixed with the true channel law, with importance weights bounded by construction -- is used for rate points where the pilot indicates a genuinely rare event beyond that direct-simulation budget. We verified this estimator against direct Monte Carlo at several rates where both are affordable, finding agreement well within one combined standard error at every point checked. Rate points whose calibrated theoretical BLER falls below $10^{-13}$ are excluded from simulation entirely.}

\textcolor{black}{We additionally evaluate a minimum-distance-aware flip decoder (DFD) \cite{dfd1,dfd2} alongside plain successive-cancellation (SC) decoding, applied to the Rayleigh channel only; this is included to confirm that the empirically observed finite-blocklength transition and Tradeoff Cliff geometry are not an artifact specific to SC decoding, rather than as a decoder-comparison study in its own right.}

\textcolor{black}{The covariance-equivalent finite-blocklength reference is instantiated by estimating the system-dependent distortion factor \(\kappa\) from the resolved empirical decoder residuals, consistent with \(\sigma_e^2(R,n)=\kappa P_e(R,n)\).}

\begin{figure*}[htbp]
    \centering
    \subfloat[Empirical vs. calibrated-theoretical residual covariance inflation\textcolor{black}{, AWGN and Rayleigh, SC (and Rayleigh DFD) decoding}.]{\includegraphics[width=0.47\linewidth]{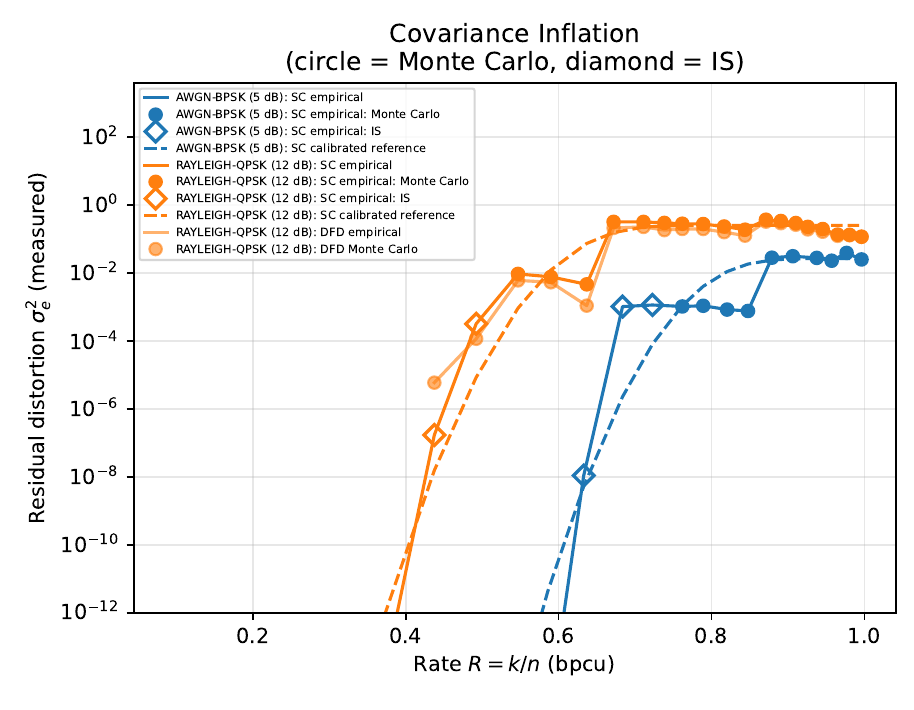}\label{fig:codeaware_covariance}}
    \hfill
    \subfloat[Polar-coded sensing FI vs. rate (Tradeoff Cliff)\textcolor{black}{, both channels and decoders}.]{\includegraphics[width=0.47\linewidth]{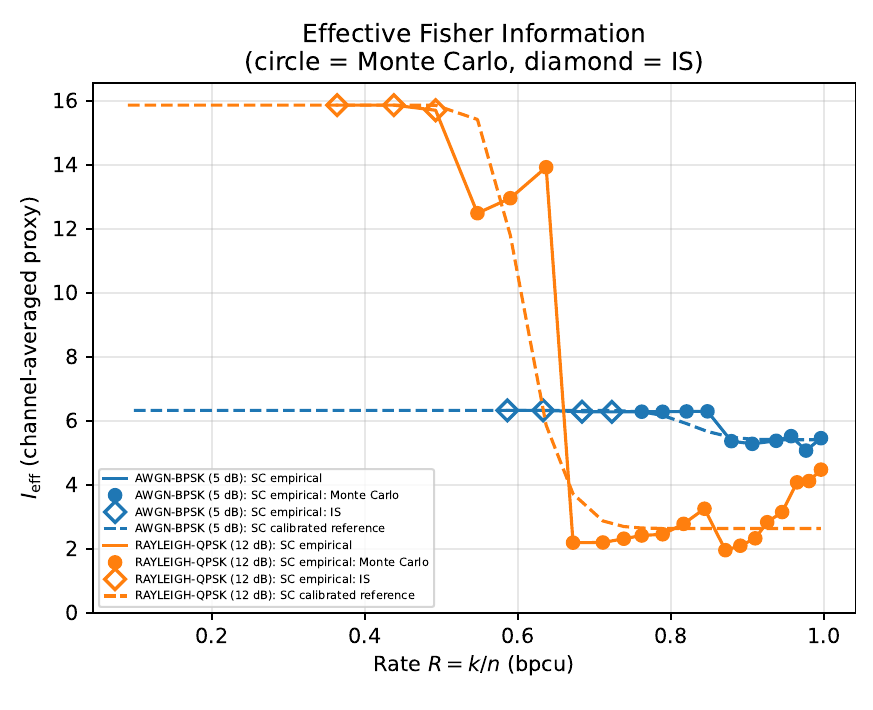}\label{fig:codeaware_tradeoff}}
    \\
    \subfloat[Polar decoder BLER vs. \textcolor{black}{calibrated} finite-blocklength \textcolor{black}{reference, both channels and decoders}.]{\includegraphics[width=0.47\linewidth]{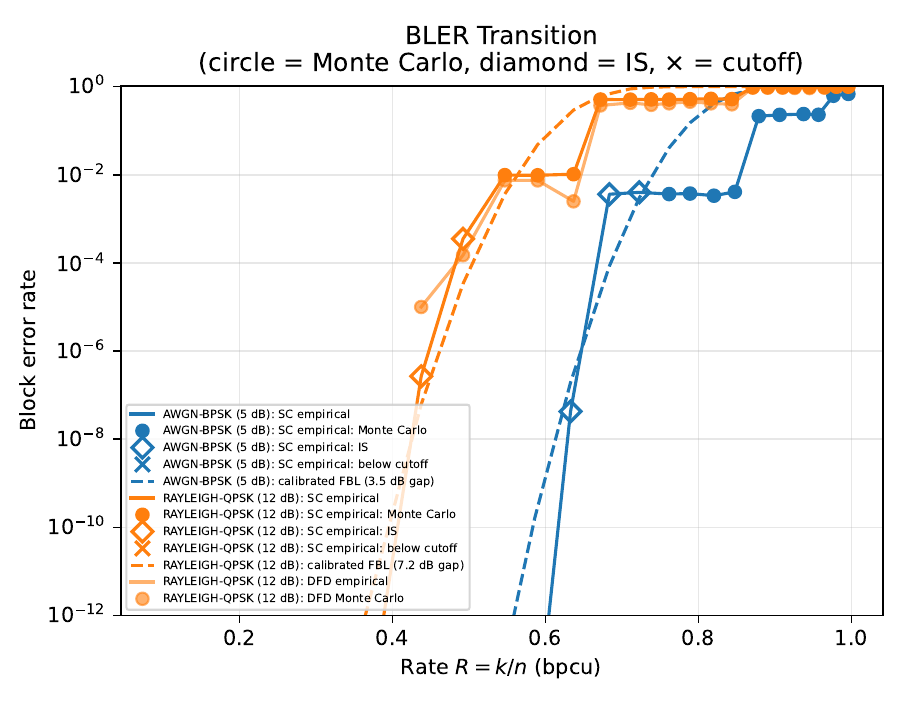}\label{fig:bler_transition}}
     \hfill
     \subfloat[CRB vs. Rate for the Rayleigh/QPSK configuration at 0, 6, 12, and 18~dB, empirical (markers) vs. calibrated theory (dashed) per SNR.]{\includegraphics[width=0.47\linewidth,height=2.7in]{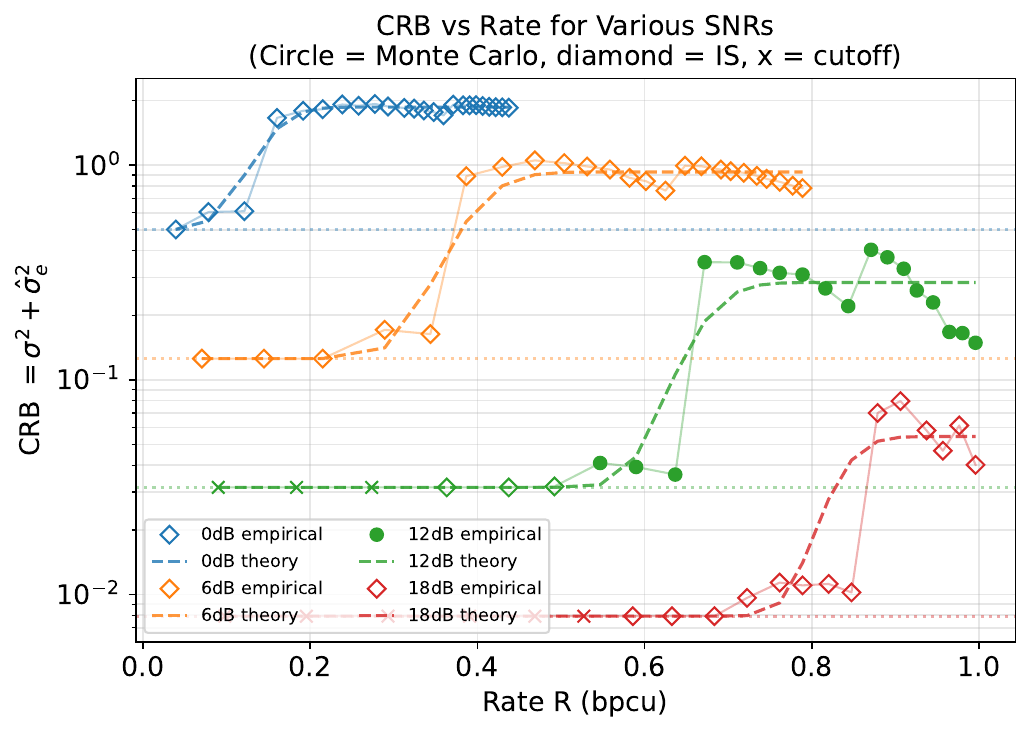}\label{fig:crb_multisnr}}
    \caption{Practical validation of the reliability-driven sensing framework using Polar-coded communication\textcolor{black}{, across AWGN/BPSK and Rayleigh/QPSK channels and SC/DFD decoding}.}
    \label{fig:polar_validation_results}
\end{figure*}

\subsubsection{Results and Discussion}
\textcolor{black}{Fig.~\ref{fig:codeaware_covariance} confirms that the Polar-decoded residual covariance remains negligible in the reliable regime, but increases sharply near the FBL boundary, following the predicted $\mathbf{R}_e \approx P_e \mathbf{R}_{\mathrm{fail}}$ relation from Appendix~\ref{app:model}, consistently across both channel models and decoding algorithms tested. The fitted covariance-inflation constant is $\kappa \approx 0.027$ for AWGN (SC), and $\kappa \approx 0.25$ (SC) / $0.20$ (DFD) for Rayleigh -- as noted above, this constant is setting-dependent rather than universal. Consequently, the sensing FI in Fig.~\ref{fig:codeaware_tradeoff} matches the analytical Tradeoff Cliff geometry derived in Section~\ref{sec4}.  }

\textcolor{black}{Fig.~\ref{fig:bler_transition} shows the empirical Polar BLER transition against its calibrated finite-blocklength reference. Rather than a single smooth transition, the empirical curve exhibits a stepwise (plateau-then-jump) structure: BLER stays approximately constant over a contiguous range of rates and then rises sharply at specific rate values, repeating this pattern across the sweep. This is explained by the code's own minimum distance structure rather than by measurement noise, and holds regardless of which reliability ordering is used to select the frozen set: for the (non-bit-reversed) Arikan transform used here, the row weight of subchannel $i$ is exactly $2^{\mathrm{popcount}(i)}$, so the code's minimum distance is $d_{\min} = 2^{\min_i \mathrm{popcount}(i)}$, the minimum taken over the information set actually selected. As \(k\) increases, the largest feasible minimum-row-weight tier remains unchanged over a range of rates and then drops discretely when that tier can no longer accommodate \(k\) information positions. The curve steps upward only once $k$ increases enough to require a bit-channel from a strictly lower popcount tier, dropping $d_{\min}$ discretely. This produces the plateau-then-jump structure visible in Fig.~\ref{fig:bler_transition}, a feature of a fixed, practically-constructed polar code under SC-family decoding, distinct from the smooth transition of finite-blocklength achievability bound, which characterizes the best possible code at each blocklength and rate.} These results demonstrate that the proposed model accurately captures the physical degradation emerging naturally from practical Polar decoding behavior, rather than being a mere artifact of the Gaussian abstraction. \textcolor{black}{We extend the SC-decoder validation to  characterizing how the Tradeoff Cliff shifts with SNR. Fig.~\ref{fig:crb_multisnr} shows the empirical CRB $= \sigma^2 + \hat\sigma_e^2$ against rate for the Rayleigh/QPSK configuration at four operating SNRs (0, 6, 12, and 18~dB), using the same real polar encoder/SC-decoder pipeline, adaptive direct/importance-sampling estimator, and SNR-gap calibration described above. The fitted covariance-inflation constant $\kappa$ decreases monotonically with SNR (1.367, 0.804, 0.253, and 0.047 at 0, 6, 12, and 18~dB respectively), consistent with the setting-dependence of $\kappa$ already noted, and in qualitative agreement with the analytical scaling of Section~IV.}

\textcolor{black}{ \textbf{We emphasize that this validation is deliberately
constructed without assuming the Gaussian covariance-equivalent model:
the residuals $e = \hat{x}-x$ and the resulting empirical covariance
$\hat{\sigma}_e^2$ in Fig.~\ref{fig:polar_validation_results} are measured directly from a
practical Polar encoder/decoder, providing independent empirical
support for the covariance-error-probability relation
$\sigma_e^2 \approx \kappa P_e$ and for the Gaussian-equivalent Fisher
Information framework of Section~III, beyond the synthetic
covariance-inflation experiments of Sections~VI.\ref{sec:validation_E}--\ref{sec:validation_G}.}} \textcolor{black}{A broader empirical characterization spanning additional blocklengths, modulation formats, and code families such as LDPC, together with an independently-calibrated DFD reference curve, remains an important direction for future work (Section~VII.C).}
%%%%%%%%%%%%%%%%%%%%%%%%%%%%%%%%%%%%%%%%%%%%%%%%%%%%%%%%%%%%%%%%%%%%%%%%%
\section{Conclusion and Future Work}\label{sec8}
The fundamental limits of Integrated Sensing and Communications (ISAC) \textcolor{black}{within the adopted decoded-reference based sensing framework was} \textbf{explored } by relaxing the classical assumption of a perfect sensing reference waveform. By incorporating FBL theory, we revealed the ``Tradeoff Cliff''---a sharp, nonlinear degradation in sensing performance that occurs as the communication rate approaches Shannon capacity.

\subsection{Summary of Contributions}
Our analysis yields several key insights into reliability-aware ISAC:
\begin{itemize}
    \item \textbf{Reliability-Driven Coupling:} Fisher Information is intrinsically tied to decoding success. Consequently, a unified waveform's utility as a sensing reference fundamentally depends on communication reliability.
    \item \textbf{The Tradeoff Cliff:} While sensing is near-optimal at low rates, it degrades precipitously as the reliability margin vanishes ($R \to C$).
    \item \textbf{Blocklength Scaling:} The blocklength $n$ governs the tradeoff geometry. Short packets in URLLC induce an earlier, gradual degradation, whereas long packets (eMBB) delay the collapse but sharpen the cliff.
    \item \textbf{Optimal Design:} We established an optimization framework that maximizes the achievable sensing precision under strict communication QoS constraints, offering a practical tool for 6G resource allocation.
\end{itemize}

\textbf{Limitations:} By modeling the decoding uncertainty as an equivalent second-order Gaussian distortion, our framework captures aggregate sensing degradation. It does not explicitly account for code-specific structured failures or burst errors, making the extension to code-aware sensing analysis an important next step. 

\subsection{Implications for 6G}
Our findings demonstrate that for unified ISAC waveforms and FBL decoded-reference based ISAC systems,
operating arbitrarily close to the communication reliability
boundary become fundamentally incompatible with maintaining
high-precision sensing performance. System designers must enforce an information-theoretic \textit{reliability margin} to ensure that the waveform remains a stable sensing reference, moving beyond purely hardware-centric design constraints.

\subsection{Future Research Directions}
Building on this framework, future work could investigate:
\begin{itemize}
    \item \textbf{Exact Analysis:} The analytical framework relies on a covariance-equivalent approximation of decoding uncertainty and on the corresponding tractable Fisher Information characterization. Extending the analysis to exact code-aware residual statistics remains an important direction for future work.
    \item \textbf{MIMO ISAC Tradeoffs:} Extending the reliability-driven model to multi-antenna systems to explore how spatial degrees of freedom might mitigate decoding uncertainty and to other extensions of ISAC like integrated polarimetric  sensing and communications (IPSAC)\cite{IPSAC}.
    \item \textbf{Imperfect CSI:} Analyzing how channel estimation errors compound the instability of the sensing reference.
    \item \textbf{Error-Resilient Coding:} Designing specialized codes that minimize sensing variance even in the event of communication decoding failures
    \textcolor{black}{\item \textbf{Experimental Validation} Experimental validation of the covariance-equivalent model across different coding schemes, modulation formats, channel models, and hardware platforms constitutes an important direction for future research.}
  
\end{itemize}
%%%%%%%%%%%%%%%%%%%%%%%%%%%%%%%%%%%%%%%%%%%%%%%%%%%%%%%%%%%%%%%%%%%%%%%%%%%%%%
%%%%%%%%%%%%%%%%%%%%%%%%%%%%%%%%%%%%%%%%%%%%%%%%%%%%%%%%%%%%%%%%%%%%%%%%%%%%%%%%%%
%%%%%%%%%%%%%%%%%%%%%%%%%%%%%%%%%%%%%%%%%%%%%%%%%%%%%%%%%%%%%%%%%%%%%%%%%%%%%%%%%%%%
\appendices
\section{Covariance Interpretation of Decoding Uncertainty}
\label{app:model}

This appendix provides the statistical foundation for the equivalent decoding-distortion model used for characterizing sensing degradation under FBL uncertainty.

\subsection{Decoder Residual and Covariance Decomposition}
Let $\mathbf{x} \in \mathbb{C}^{n}$ be a coded block and $\hat{\mathbf{x}} = \mathcal{D}(\mathbf{y})$ be the output of a FBL decoder. We define the decoding residual as $\mathbf{e} = \hat{\mathbf{x}} - \mathbf{x}$, such that the decoded waveform is $\hat{\mathbf{x}} = \mathbf{x} + \mathbf{e}$. 

While the exact structure of $\mathbf{e}$ depends on the specific code and SNR, we adopt a \emph{covariance-equivalent abstraction} to capture its second-order impact on sensing. Let $\mathcal{E} = \{\hat{\mathbf{x}} \neq \mathbf{x}\}$ be the decoding error event with probability $P_e$. By the law of total expectation, the residual covariance $\mathbf{R}_e = \mathbb{E}[\mathbf{e}\mathbf{e}^H]$ decomposes as:
\begin{equation}
\mathbf{R}_e = (1-P_e)\mathbf{R}_{\mathrm{succ}} + P_e\mathbf{R}_{\mathrm{fail}},
\label{eq:mixture_covariance}
\end{equation}
where $\mathbf{R}_{\mathrm{succ}}$ and $\mathbf{R}_{\mathrm{fail}}$ are the residual covariances conditioned on success and failure, respectively. In reliable FBL systems, the distortion under successful decoding is negligible ($\mathbf{R}_{\mathrm{succ}} \approx \mathbf{0}$) \cite{Polyanskiy2010_FBL}, implying that the aggregate uncertainty scales directly with the error probability:
\begin{equation}
\mathbf{R}_e \approx P_e \mathbf{R}_{\mathrm{fail}}.
\end{equation}

\subsection{Isotropic Approximation and Gaussian Interpretation}
In interleaved coded systems, Fisher Information is primarily affected by the aggregate covariance magnitude, rather than by fine symbol-level correlations. Assuming no directional preference in the residual, we model the failure covariance as isotropic: $\mathbf{R}_{\mathrm{fail}} \approx \kappa \mathbf{I}$, where $\kappa > 0$ is a system-dependent scaling parameter. This yields the model:
\begin{equation}
\mathbf{R}_e \approx \kappa P_e \mathbf{I} \implies e_k \sim \mathcal{CN}(0, \sigma_e^2), \quad \sigma_e^2 = \kappa P_e.
\end{equation}

\textcolor{black}{A Gaussian distribution is adopted as a covariance-equivalent analytical surrogate; Gaussianity of the decoder residual itself is not required. In systems containing multiple approximately independent impairment sources, a Gaussian aggregate may additionally be motivated asymptotically by central-limit arguments.} If $e_k$ arises from multiple independent perturbation sources (e.g., quantization, synchronization mismatch, and interference), the aggregate residual converges in distribution to $\mathcal{CN}(0, \sigma_e^2)$. This allows for an analytically tractable Fisher Information analysis that still captures the dominant physical degradation.

\subsection{Scope and Limitations}
This formulation is an equivalent sensing-distortion abstraction, rather than an exact physical decoder model. \textcolor{black}{Accordingly, the Gaussian decoding-distortion
model should be interpreted as a covariance-equivalent representation of
the aggregate decoding uncertainty, rather than as the exact statistical
distribution of decoder residuals.} It is most accurate for metrics dependent on second-order statistics (FI and CRB) and does not explicitly capture:
\begin{itemize}
    \item Code-specific structured failures or burst-error behavior.
    \item Symbol-dependent correlations or non-Gaussian residual distributions.
\end{itemize}

\textcolor{black}{The proposed approximation is therefore most
appropriate for the analytical characterization of sensing performance in
decoded-reference ISAC systems, where the dominant effect of decoding
uncertainty is captured through its covariance.} 

Despite these simplifications, the framework provides a mathematically rigorous mechanism for connecting communication reliability with Fisher Information loss. \textcolor{black}{\textbf{An empirical instance of this connection under
a real, code-specific decoder --- rather than the idealized Gaussian
residual model assumed above --- is validated in
Section~VI.\ref{sec:polar_validation} (Fig.~\ref{fig:polar_validation_results}).}}
%%%%%%%%%%%%%%%%%%%%%%%%%%%%%%%%%%%%%%%%%%%%%%%%%%%%%%%%%%%%%%%%%%%%%%%%%%%%%%%%%%%%
%%%%%%%%%%%%%%%%%%%%%%%%%%%%%%%%%%%%%%%%%%%%%%%%%%%%%%%%%%%%%%%%%%%%%%%%%%%%%%%%%%%%
\section{Proof of Theorem 1}
\label{app:FI_proof}

This appendix derives the Fisher Information  for the complex scalar channel gain $h$ under conditions of decoding uncertainty. \textcolor{black}{
The expectation in the following derivation is taken jointly over the
observation noise and the decoded-reference uncertainty. Equivalently,
the Fisher Information derived here corresponds to the expected Fisher
Information
\[
I(h)=\mathbb{E}_{\hat{\mathbf{x}}}
\!\left[
I(h|\hat{\mathbf{x}})
\right],
\]
where the inner Fisher Information is conditioned on a particular
decoded waveform realization.}
\subsection{Observation Model}
The received sensing signal is $y = hx + w$, where $w \sim \mathcal{CN}(0,N_0)$. The receiver only has access to the decoded waveform $\hat{x} = x + e$, with $e \sim \mathcal{CN}(0,\sigma_e^2)$ being  independent of $x$ and $w$. Substituting $x = \hat{x} - e$ into the observation model yields:
\begin{equation}
y = h\hat{x} + \tilde{w}, \quad \tilde{w} = w - he.
\end{equation}
Since $w$ and $e$ are independent Gaussian variables, the effective noise $\tilde{w}$ is distributed as $\mathcal{CN}(0, \sigma_{\tilde{w}}^2)$ with variance $\sigma_{\tilde{w}}^2 = N_0 + |h|^2\sigma_e^2$. 
Under the covariance-equivalent approximation developed in
Appendix~\ref{app:model}, the observation model is approximated as
\[
y|\hat{x}\approx \mathcal{CN}(h\hat{x},\sigma_{\tilde w}^2),
\]
where the approximation neglects higher-order residual
correlation terms induced by the coupling between
$\hat{x}$ and the decoding residual.

\textcolor{black}{
\begin{prop}[Accuracy of the Covariance-Equivalent Approximation]
\label{prop:covariance_equivalent}
Consider the decoded-reference based  sensing model $\hat{\mathbf{x}}=\mathbf{x}+\mathbf{e}$ and $\tilde{\mathbf{w}} = \mathbf{w}-\alpha\mathbf{e}$, where $\mathbf{e}\sim\mathcal{CN}(\mathbf{0},\sigma_e^2\mathbf{I})$ and $\mathbf{w}\sim\mathcal{CN}(\mathbf{0},N_0\mathbf{I})$ are independent.
\\
Under moderate decoding uncertainty ($\sigma_e^2\ll\min(P,N_0)$), the exact Fisher Information admits the perturbation expansion of
\\
\begin{equation}
\mathbf{I} = \mathbf{I}_{\rm eq} + \Delta\mathbf{I},
\end{equation}
\\
where $\mathbf{I}_{\rm eq}$ is the Fisher Information obtained via the proposed covariance-equivalent model, and the perturbation from the neglected cross-covariance matrix $\mathbf{\Delta}$ satisfies $\Delta\mathbf{I} = O\left(\left\| \mathbf{\Sigma}_{\rm eq}^{-1} \mathbf{\Delta} \right\|\right)$.
\\
Consequently, the proposed covariance-equivalent Fisher Information is the leading-order approximation to the exact Fisher Information, while the correlation between the decoded waveform and effective noise appears only as a higher-order perturbation.
\end{prop}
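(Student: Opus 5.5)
The approach is to write the exact (jointly Gaussian) model for the pair $(\mathbf{y},\hat{\mathbf{x}})$ and treat the conditional law of $\mathbf{y}$ given $\hat{\mathbf{x}}$ as a perturbation of the covariance-equivalent law $\mathcal{CN}(\alpha\hat{\mathbf{x}},\mathbf{\Sigma}_{\rm eq})$, with $\mathbf{\Sigma}_{\rm eq}=(N_0+|\alpha|^2\sigma_e^2)\mathbf{I}$. Conditioning on $\mathbf{x}$ (or treating $\mathbf{x}$ as Gaussian with power $P$), the vector $[\mathbf{y}^T,\hat{\mathbf{x}}^T]^T$ is jointly Gaussian. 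The exact conditional law of $\mathbf{y}$ given $\hat{\mathbf{x}}$ is then Gaussian with mean $\boldsymbol{\mu}=\alpha\hat{\mathbf{x}}-\alpha\,\mathbb{E}[\mathbf{e}|\hat{\mathbf{x}}]$ and covariance $\mathbf{\Sigma}=\mathbf{\Sigma}_{\rm eq}+\mathbf{\Delta}$. Here $\mathbf{\Delta}$ collects the terms that the covariance-equivalent model drops, namely the contribution of the cross-covariance $\mathrm{Cov}(\tilde{\mathbf{w}},\hat{\mathbf{x}})=-\alpha\sigma_e^2\mathbf{I}$ after Schur complementation. The mean shift is of the same order. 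Explicitly, $\mathbb{E}[\mathbf{e}|\hat{\mathbf{x}}]=\frac{\sigma_e^2}{P+\sigma_e^2}\hat{\mathbf{x}}$ and $\mathbf{\Delta}=-\frac{|\alpha|^2\sigma_e^4}{P+\sigma_e^2}\mathbf{I}$. Both are $O(\sigma_e^2/P)$ relative to the retained terms, which is exactly where the hypothesis $\sigma_e^2\ll\min(P,N_0)$ will be used.

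Next, apply the Slepian--Bangs formula for complex Gaussian observations to both the exact and the equivalent models:
\[
[\mathbf{I}]_{ij}=\mathrm{tr}\!\left(\mathbf{\Sigma}^{-1}\partial_i\mathbf{\Sigma}\,\mathbf{\Sigma}^{-1}\partial_j\mathbf{\Sigma}\right)+2\Re\!\left\{\partial_i\boldsymbol{\mu}^H\mathbf{\Sigma}^{-1}\partial_j\boldsymbol{\mu}\right\}.
\]
Then expand $\mathbf{\Sigma}^{-1}$ by a Neumann series,
\[
(\mathbf{\Sigma}_{\rm eq}+\mathbf{\Delta})^{-1}=\mathbf{\Sigma}_{\rm eq}^{-1}-\mathbf{\Sigma}_{\rm eq}^{-1}\mathbf{\Delta}\mathbf{\Sigma}_{\rm eq}^{-1}+O\!\left(\|\mathbf{\Sigma}_{\rm eq}^{-1}\mathbf{\Delta}\|^2\right),
\]
which is valid once $\|\mathbf{\Sigma}_{\rm eq}^{-1}\mathbf{\Delta}\|<1$; this holds because the norm is at most $\sigma_e^4/((P+\sigma_e^2)N_0)\ll1$. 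Substituting this expansion, together with the perturbed mean and the derivatives $\partial_i\mathbf{\Delta}$, into both terms of the formula gives $\mathbf{I}=\mathbf{I}_{\rm eq}+\Delta\mathbf{I}$. Every difference term contains at least one factor of $\mathbf{\Sigma}_{\rm eq}^{-1}\mathbf{\Delta}$, or one factor of $\mathbf{\Sigma}_{\rm eq}^{-1/2}\mathbb{E}[\mathbf{e}|\hat{\mathbf{x}}]$, whose second-order effect is of the same order. Finally, take the expectation over $\hat{\mathbf{x}}$, as in \eqref{eq:expected_FI}, using $\mathbb{E}\|\hat{\mathbf{x}}\|^2=n(P+\sigma_e^2)$.

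The main obstacle is the bookkeeping needed to show that every cross term is bounded by a constant times $\|\mathbf{\Sigma}_{\rm eq}^{-1}\mathbf{\Delta}\|$, uniformly in the parameters. Two terms need care. The first is the derivative of $\mathbf{\Delta}$ with respect to $\alpha$, which matters because $\mathbf{\Delta}$ depends on $\alpha$; its relative size is controlled by the same ratio $\sigma_e^2/(P+\sigma_e^2)$ because $\mathbf{\Delta}$ is quadratic in $\alpha$. The second is the mean-shift term, which enters $\Delta\mathbf{I}$ via the mean term of the Slepian--Bangs formula. I would first handle these entries explicitly for the scalar case $\alpha=h$, with the isotropic structure reducing everything to scalars. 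I would then argue that for the delay and Doppler parameters only the mean term changes, through the same scalar factor, so the bound carries over entrywise. The conclusion, that the correlation between $\hat{\mathbf{x}}$ and $\tilde{\mathbf{w}}$ enters only at higher order, then follows directly.
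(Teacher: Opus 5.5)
Your route differs from the paper's. The paper takes $\mathbf{\Delta}$ to be the off-diagonal block of the joint covariance of $(\hat{\mathbf{x}},\tilde{\mathbf{w}})$, i.e.\ the cross-covariance $-\alpha\sigma_e^2\mathbf{I}$ itself. It Neumann-expands the inverse of that block matrix and then simply asserts that the Gaussian FI inherits an $O(\|\mathbf{\Sigma}_{\rm eq}^{-1}\mathbf{\Delta}\|)$ error. You instead compute the exact law of $\mathbf{y}$ given $\hat{\mathbf{x}}$ and redefine $\mathbf{\Delta}$ as the Schur-complement correction, and that redefinition is where the plan breaks.

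Your $\mathbf{\Delta}=-\frac{|\alpha|^2\sigma_e^4}{P+\sigma_e^2}\mathbf{I}$ is \emph{quadratic} in the cross-covariance. With $s=\sigma_e^2/(P+\sigma_e^2)$,
\[
\eta:=\|\mathbf{\Sigma}_{\rm eq}^{-1}\mathbf{\Delta}\|=s\,\frac{|\alpha|^2\sigma_e^2}{N_0+|\alpha|^2\sigma_e^2}<s .
\]
The mean shift, however, enters the FI \emph{linearly}. The exact mean is $\alpha(1-s)\hat{\mathbf{x}}$, so $\partial\boldsymbol{\mu}/\partial\Re\{\alpha\}=(1-s)\hat{\mathbf{x}}$ is parallel to the retained derivative $\hat{\mathbf{x}}$. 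The exact conditional variance is $(N_0+|\alpha|^2\sigma_e^2)(1-\eta)$, so the $\Re\{\alpha\}$ mean-term entry is multiplied by $(1-s)^2/(1-\eta)=1-2s+\eta+O(s^2)$. The resulting perturbation, divided by $\eta$ times the retained entry, is about $1+2N_0/(|\alpha|^2\sigma_e^2)$. That ratio is unbounded precisely in the regime $\sigma_e^2\ll N_0$ that the hypothesis imposes.

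You are right that the shift's second-order effect is of the same order as $\eta$. Indeed $|\alpha|^2\,\mathbb{E}\big[\mathbb{E}[\mathbf{e}|\hat{\mathbf{x}}]\,\mathbb{E}[\mathbf{e}|\hat{\mathbf{x}}]^H\big]$ is exactly your Schur correction. But the first-order cross term, $-2s$ times the retained entry, does not vanish and dominates. So the uniform bound $\Delta\mathbf{I}=O(\|\mathbf{\Sigma}_{\rm eq}^{-1}\mathbf{\Delta}\|)$ that your bookkeeping is meant to establish is false for your $\mathbf{\Delta}$.

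What your expansion actually proves is $\Delta\mathbf{I}=O(s)$ relative to $\mathbf{I}_{\rm eq}$, i.e.\ first order in the cross-covariance. That is what the statement needs. With the statement's $\mathbf{\Delta}$ (the off-diagonal cross-covariance block) one has $\|\mathbf{\Sigma}_{\rm eq}^{-1}\mathbf{\Delta}\|\ge|\alpha|s$. Your explicit Slepian--Bangs computation then yields the claimed bound, for $|\alpha|$ bounded away from zero, together with the leading-order conclusion. It does so more transparently than the paper's one-line appeal to the Neumann series, which never tracks the mean. So keep the computation, but measure the error by the cross-covariance block, or directly by $s$, rather than by the Schur correction.

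Two smaller slips:
\begin{itemize}
\item $\mathbb{E}[\mathbf{e}|\hat{\mathbf{x}}]=s\hat{\mathbf{x}}$ requires marginalizing over $\mathbf{x}\sim\mathcal{CN}(\mathbf{0},P\mathbf{I})$. Conditioning on $\mathbf{x}$ would make $\mathbf{y}$ and $\hat{\mathbf{x}}$ independent.
\item Your bound $\sigma_e^4/((P+\sigma_e^2)N_0)$ drops a factor $|\alpha|^2$. This is harmless, since $\eta<s<1$ gives convergence of the Neumann series regardless.
\end{itemize}
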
}
\begin{proof}
\textcolor{black}{
The decoded reference waveform and effective noise have the exact joint
covariance
\[
\mathbf{\Sigma}
=
\begin{bmatrix}
(P+\sigma_e^2)\mathbf{I}
&
-\alpha\textcolor{black}{^*}\sigma_e^2\mathbf{I}
\\
-\alpha\sigma_e^2\mathbf{I}
&
(N_0+|\alpha|^2\sigma_e^2)\mathbf{I}
\end{bmatrix}
=
\mathbf{\Sigma}_{\rm eq}
+
\mathbf{\Delta},
\]
where
\[
\mathbf{\Sigma}_{\rm eq}
=
\begin{bmatrix}
(P+\sigma_e^2)\mathbf{I}
&
\mathbf{0}
\\
\mathbf{0}
&
(N_0+|\alpha|^2\sigma_e^2)\mathbf{I}
\end{bmatrix}.
\]
Since
\[
\sigma_e^2\ll\min(P,N_0),
\]
we have
\[
\|
\mathbf{\Sigma}_{\rm eq}^{-1}
\mathbf{\Delta}
\|
<
1,
\]
and therefore the inverse covariance admits the convergent Neumann
series expansion
\[
\mathbf{\Sigma}^{-1}
=
\mathbf{\Sigma}_{\rm eq}^{-1}
-
\mathbf{\Sigma}_{\rm eq}^{-1}
\mathbf{\Delta}
\mathbf{\Sigma}_{\rm eq}^{-1}
+
O\!\left(
\|
\mathbf{\Sigma}_{\rm eq}^{-1}
\mathbf{\Delta}
\|^2
\right),
\]
following the standard matrix perturbation result
\cite{HornJohnson2013}.
Since the Fisher Information for a complex Gaussian model depends on the
inverse covariance matrix, substitution of the above expansion yields
\[
\mathbf{I}
=
\mathbf{I}_{\rm eq}
+
O\!\left(
\|
\mathbf{\Sigma}_{\rm eq}^{-1}
\mathbf{\Delta}
\|
\right).
\]
Hence, the covariance-equivalent Fisher Information relied upon in this
treatise corresponds to the leading-order term of the exact Fisher
Information, while the effect of the neglected cross-covariance appears
only as a perturbation.}
\end{proof}
%%%%%%%%%%%%%%%%%%%%%%%%%%%%%%%%%%%%%%%%%%%%%%%%%%%%%%%%%%%%%%%%%%%%%%%%%%%%%%%%%%%%%%%%%%%%%%%%%%%%%%%%%%%%%%%%%%%%%%%%%
\textcolor{black}{\subsection{Step 1: Conditional Fisher Information for Fixed $\hat{x}$}
For a \emph{fixed} realization of the decoded reference $\hat{x}$, the
conditional log-likelihood is given by
\begin{equation}
    \log p(y|h,\hat{x}) = -\frac{|y-h\hat{x}|^2}{\sigma_{\tilde{w}}^2}
    - \log(\pi \sigma_{\tilde{w}}^2).
    \label{eq:loglik_cond}
\end{equation}
Note that the effective variance $\sigma_{\tilde{w}}^2 = N_0 +
|h|^2\sigma_e^2$ depends on $h$ through its second moment $\sigma_e^2$,
but \emph{not} on the specific realization of $\hat{x}$ itself, since
$\sigma_e^2$ is a property of the decoding channel rather than of any
particular decoded outcome. With the Wirtinger derivative
$\partial \sigma_{\tilde{w}}^2/\partial h^* = h\sigma_e^2$, the score
function conditioned on $\hat{x}$ is
\begin{equation}
    \frac{\partial}{\partial h^*}\log p(y|h,\hat{x}) =
    \frac{\hat{x}\textcolor{black}{^*}(y-h\hat{x})}{\sigma_{\tilde{w}}^2}
    \textcolor{black}{+} \frac{h\sigma_e^2}{\sigma_{\tilde{w}}^2}
    \left( \frac{|y-h\hat{x}|^2}{\sigma_{\tilde{w}}^2}\textcolor{black}{-}1\right).
    \label{eq:score_cond}
\end{equation}
Defining the residual $\epsilon = y - h\hat{x} = \tilde{w}$, with
$\epsilon \sim \mathcal{CN}(0,\sigma_{\tilde{w}}^2)$, and evaluating
$I(h|\hat{x}) = \mathbb{E}\big[|\partial \log p(y|h,\hat{x})/\partial
h^*|^2 \,\big|\, \hat{x}\big]$ under the equivalent covariance-based
decoding uncertainty model of Appendix A, while retaining terms up to
the moderate-decoding-uncertainty approximation, yields
\begin{equation}
    I(h|\hat{x}) = \frac{|\hat{x}|^2}{N_0+|h|^2\sigma_e^2}
    + \frac{|h|^2\sigma_e^4}{(N_0+|h|^2\sigma_e^2)^2}.
    \label{eq:FI_cond}
\end{equation}
}
\textcolor{black}{\subsection{Step 2: Expectation over $\hat{x}$}
Since $\hat{x} = x + e$ is itself random due to finite-blocklength
decoding, the operationally relevant sensing metric is the Fisher
Information averaged over the distribution of $\hat{x}$,
\begin{equation}
    I(h) = \mathbb{E}_{\hat{x}}\big[I(h|\hat{x})\big].
    \label{eq:FI_avg}
\end{equation}
Inspection of \eqref{eq:FI_cond} shows that $I(h|\hat{x})$ depends on
$\hat{x}$ only through $|\hat{x}|^2$, appearing solely in the
mean-gradient term. The variance-gradient term (the second term of
\eqref{eq:FI_cond}) does not depend on $\hat{x}$ directly, only on
$\sigma_e^2$. Consequently, taking the expectation in
\eqref{eq:FI_avg} reduces to replacing $|\hat{x}|^2$ by its expected
value,
\begin{equation}
    \mathbb{E}[|\hat{x}|^2] = P + \sigma_e^2 \approx P,
    \label{eq:E_xhat}
\end{equation}
which gives
\begin{equation}
    I(h) \approx \frac{\mathbb{E}[|\hat{x}|^2]}{N_0+|h|^2\sigma_e^2}
    + \frac{|h|^2\sigma_e^4}{(N_0+|h|^2\sigma_e^2)^2}.
    \label{eq:FI_final}
\end{equation}
This confirms that the conditional derivation of Step~1 and the
expectation of Step~2 coincide in reducing to the single substitution
\eqref{eq:E_xhat}, since $I(h|\hat{x})$ has no dependence on $\hat{x}$
beyond its second moment. Equation~\eqref{eq:FI_final} is precisely
\eqref{eq8} of Theorem~1.
\\
\begin{rem}
Because $I(h|\hat{x})$ depends on $\hat{x}$ only through $|\hat{x}|^2$,
the conditional-then-averaged construction above and the direct
single-step derivation based on the averaged observation model
$y|\hat{x}\sim\mathcal{CN}(h\hat{x},\sigma_{\tilde{w}}^2)$ yield the same
result. This equivalence relies on the covariance-equivalent modeling
assumption of Appendix A, under which $\sigma_{\tilde{w}}^2$ depends on
$\hat{x}$ only through the second moment - $\sigma_e^2$, rather than
through the specific realized deviation $e = \hat{x}-x$.
\end{rem}
}
%%%%%%%%%%%%%%%%%%%%%%%%%%%%%%%%%%%%%%%%%%%%%%%%%%%%%%%%%%%%%%%%%%%%%%%%%%%%%%%%%%%%%%%%%%%%%%%%%%%%%%%%%%%%%%%%%%%%%%%%%
\subsection{Approximation Justification}
\label{AG}
When the uncertainty-induced noise is small relative to thermal noise ($|h|^2\sigma_e^2 \ll N_0$), the variance-gradient term (the second term in \eqref{eq:FI_final}) becomes negligible. Consequently:
\begin{equation}
I(h) \approx \frac{\mathbb{E}[|\hat{x}|^2]}{N_0 + |h|^2\sigma_e^2}.
\end{equation}
Finally, noting that $\mathbb{E}[|\hat{x}|^2] = P + \sigma_e^2 \approx P$, the Fisher Information simplifies to the equivalent noise formulation:
\begin{equation}
I(h) \approx \frac{P}{N_0 + |h|^2\sigma_e^2}.
\end{equation}
This completes the proof.
%%%%%%%%%%%%%%%%%%%%%%%%%%%%%%%%%%%%%%%%%%%%%%%%%%%%%%%%%%%%%%%%%%%%%%%%%%%%%%%%
%%%%%%%%%%%%%%%%%%%%%%%%%%%%%%%%%%%%%%%%%%%%%%%%%%%%%%%%%%%%%%%%%%%%%%%%%%%%%%%%%
{\color{black}
\section{Proof of Theorem 2: Vector Fisher Information for Multi-Parameter Estimation}
\label{app:vector_fi}}

We extend the scalar Fisher Information analysis to the joint estimation of multiple sensing parameters: namely to delay, Doppler, and complex channel gain.

\subsection{Signal Model and Parameter Vector}
Consider a parametric received signal $y(t) = \alpha x(t - \tau) e^{j2\pi \nu t} + w(t)$, where $\alpha \in \mathbb{C}$ is the complex reflection coefficient, $\tau$ is the delay, $\nu$ is the Doppler shift, and $w(t) \sim \mathcal{CN}(0, N_0)$ is the additive thermal noise. 

Due to decoding uncertainty, the sensing receiver operates on the decoded estimate $\hat{x}(t) = x(t) + e(t)$, with error $e(t) \sim \mathcal{CN}(0, \sigma_e^2)$. Substituting $x(t)$ into the observation yields:
\begin{equation}
y(t) = \alpha \hat{x}(t - \tau) e^{j2\pi \nu t} + \tilde{w}(t),
\end{equation}
where the effective noise $\tilde{w}(t) = w(t) - \alpha e(t - \tau) e^{j2\pi \nu t}$ follows $\mathcal{CN}(0, \sigma^2)$ with the inflated variance $\sigma^2 = N_0 + |\alpha|^2 \sigma_e^2$.

We define the parameter vector as $\boldsymbol{\theta} = [\tau, \nu, \Re(\alpha), \Im(\alpha)]^T$. The corresponding Fisher Information Matrix (FIM) is:
\begin{equation}
\mathbf{I}(\boldsymbol{\theta}) = \mathbb{E} \left[
\left( \frac{\partial \log p(y|\boldsymbol{\theta})}{\partial \boldsymbol{\theta}} \right)
\left( \frac{\partial \log p(y|\boldsymbol{\theta})}{\partial \boldsymbol{\theta}} \right)^T
\right].
\end{equation}
\textcolor{black}{
The corresponding Cramér--Rao Bound is obtained from
\begin{equation}
\mathrm{CRB}(\boldsymbol{\theta})
=
\mathbf{I}^{-1}(\boldsymbol{\theta}),
\end{equation}
where both the diagonal and off-diagonal entries of
$\mathbf{I}(\boldsymbol{\theta})$ contribute to the estimation
accuracy through parameter coupling.
}
\subsection{FIM Evaluation and Dominant Structure}
For the complex Gaussian model $y(t) \sim \mathcal{CN}(\mu(t;\boldsymbol{\theta}), \sigma^2)$ with mean $\mu(t;\boldsymbol{\theta}) = \alpha \hat{x}(t-\tau) e^{j2\pi \nu t}$,  \textcolor{black}{
each entry of the Fisher Information Matrix can be written as
}
\begin{equation}
\textcolor{black}{
I_{ij}
=
\frac{2}{\sigma^2}
\Re
\left\{
\left(
\frac{\partial\mu}{\partial\theta_i}
\right)^H
\left(
\frac{\partial\mu}{\partial\theta_j}
\right)
\right\}
+
I_{ij}^{(\mathrm{var})},
}
\end{equation}

\textcolor{black}{
where $I_{ij}^{(\mathrm{var})}$ denotes the covariance-gradient
contribution. Under the moderate decoding uncertainty assumption
($\sigma_e^2\ll P$), the variance-gradient terms are negligible, yielding
}
\begin{equation}
\textcolor{black}{
I_{ij}
\approx
\frac{1}
{N_0+|\alpha|^2\sigma_e^2(R,n)}
I_{ij}^{(0)},
}
\end{equation}
where
\[
I_{ij}^{(0)}
=
2
\Re
\left\{
\left(
\frac{\partial\mu}{\partial\theta_i}
\right)^H
\left(
\frac{\partial\mu}{\partial\theta_j}
\right)
\right\}
\]

\textcolor{black}{is the corresponding Fisher Information entry obtained with a perfectly
known reference waveform.}

The partial derivatives of the mean signal with respect to each parameter are:
\begin{align}
\frac{\partial \mu(t)}{\partial \tau} &= -\alpha \dot{\hat{x}}(t-\tau) e^{j2\pi \nu t}, \quad
&\frac{\partial \mu(t)}{\partial \Re(\alpha)} &= \hat{x}(t-\tau) e^{j2\pi \nu t}, \\
\frac{\partial \mu(t)}{\partial \nu} &= j2\pi t \alpha \hat{x}(t-\tau) e^{j2\pi \nu t}, \quad
&\frac{\partial \mu(t)}{\partial \Im(\alpha)} &= j \hat{x}(t-\tau) e^{j2\pi \nu t}.
\end{align}

\textcolor{black}{
Collecting all pairwise Fisher Information terms yields the complete
Fisher Information Matrix
}
\begin{equation}
\textcolor{black}{
\mathbf{I}(\boldsymbol{\theta})=
\begin{bmatrix}
I_{\tau\tau} & I_{\tau\nu} & I_{\tau\Re(\alpha)} & I_{\tau\Im(\alpha)}\\
I_{\nu\tau} & I_{\nu\nu} & I_{\nu\Re(\alpha)} & I_{\nu\Im(\alpha)}\\
I_{\Re(\alpha)\tau} & I_{\Re(\alpha)\nu} &
I_{\Re(\alpha)\Re(\alpha)} &
I_{\Re(\alpha)\Im(\alpha)}\\
I_{\Im(\alpha)\tau} & I_{\Im(\alpha)\nu} &
I_{\Im(\alpha)\Re(\alpha)} &
I_{\Im(\alpha)\Im(\alpha)}
\end{bmatrix},
}
\end{equation}

\textcolor{black}{
where the off-diagonal entries quantify the coupling between the jointly
estimated parameters. The diagonal entries are listed below for
completeness.
}
\begin{align}
I_{\tau\tau} &= \frac{2|\alpha|^2}{\sigma^2} \int |\dot{\hat{x}}(t)|^2 dt, \\
I_{\nu\nu} &= \frac{8\pi^2 |\alpha|^2}{\sigma^2} \int t^2 |\hat{x}(t)|^2 dt, \\
I_{\alpha\alpha} &= \frac{2}{\sigma^2} \int |\hat{x}(t)|^2 dt.
\end{align}

\subsection{Effect of Decoding Uncertainty on CRB}
The Cramér–Rao Bound (CRB) matrix is the inverse of the FIM, $\mathrm{CRB}(\boldsymbol{\theta}) = \mathbf{I}^{-1}(\boldsymbol{\theta})$. Decoding uncertainty degrades the FIM through two mechanisms: \textit{noise amplification} ($\sigma^2 = N_0 + |\alpha|^2 \sigma_e^2$) and \textit{signal distortion} ($\mathbb{E}[|\hat{x}(t)|^2] = P + \sigma_e^2$). 

Consequently, the CRBs formulated for delay and Doppler estimation scale as:
\begin{align}
\mathrm{CRB}(\tau) \propto \frac{\sigma^2}{|\alpha|^2 \int |\dot{\hat{x}}(t)|^2 dt}, \qquad \mathrm{CRB}(\nu) \propto \frac{\sigma^2}{|\alpha|^2 \int t^2 |\hat{x}(t)|^2 dt}.
\end{align}

\textbf{Remark:} \textcolor{black}{
Since every entry of the Fisher Information Matrix is scaled by the same
reliability-dependent covariance inflation factor,
\[
\frac{1}
{N_0+|\alpha|^2\sigma_e^2(R,n)},
\]
both the diagonal and off-diagonal Fisher Information terms inherit the
same finite-blocklength reliability dependence. Consequently, the
Tradeoff Cliff extends naturally to the complete Fisher Information
Matrix and therefore to the joint CRBs for delay, Doppler, and complex
channel gain.
}

%%%%%%%%%%%%%%%%%%%%%%%%%%%%%%%%%%%%%%%%%%%%%%%%%%%%%%%%%%%%%%%%%%%%%%%%%%%%%%%%%%%%%%%%%%%%%%%%%%%%%%%%%%%%%%%%%%%%%%%%%%%%%%%%%%%%%%%%%%%%%%%%%%%%%%%%%%%%%%%%%%%%%%%%%%%%%%%%%%%%%%%%%%%%%%%%%%%%%%%%%%%%%%%%%%%%%%%%%%%%%%%%%%%%%%%%%%%%%%%%%%%%%%%%%%
\bibliographystyle{IEEEtran}
%\begin{thebibliography}{99}
\bibliography{Ref1}

@article{Liu2022_ISAC,
  author = {Liu, F. and Cui, Y. and Masouros, C. and Xu, J. and Han, T. X. and Eldar, Y. C. and Buzzi, S.},
  title = {Integrated Sensing and Communications: Toward Dual-Functional Wireless Networks for {6G} and Beyond},
  journal = {IEEE Journal on Selected Areas in Communications},
  volume = {40},
  number = {6},
  pages = {1728--1767},
  year = {2022},
  doi = {10.1109/JSAC.2022.3156632}
}

@inproceedings{Khan2026_CommSenseLimits,
  author    = {Khan, Mohammed Zafar Ali},
  title     = {Poster: Fundamental Limits of Network Sensing Using {CommSense}},
  booktitle = {Proceedings of IEEE LANMAN}, 
  year      = {2026},
  pages     = {1--3},
  month     = {June},
  dates     ={15,16}
}

@article{Liu2020_ISACFramework,
  author = {Liu, Fan and Masouros, Christos and Petropulu, Athina P. and Griffiths, Hugh and Hanzo, Lajos},
  title = {Joint Radar and Communication Design: Applications, State-of-the-Art, and the Road Ahead},
  journal = {IEEE Transactions on Communications},
  volume = {68},
  number = {6},
  pages = {3834--3862},
  year = {2020}
}

@article{Polyanskiy2010_FBL,
  author = {Polyanskiy, Y. and Poor, H. V. and Verd{\'u}, S.},
  title = {Channel Coding Rate in the Finite Blocklength Regime},
  journal = {IEEE Transactions on Information Theory},
  volume = {56},
  number = {5},
  pages = {2307--2359},
  year = {2010},
  doi = {10.1109/TIT.2010.2043745}
}

@article{Ding2021_IT_ISAC,
  author = {Ding, Z. and Schober, R. and Poor, H. V.},
  title = {On the Limits of Integrated Sensing and Communication Systems},
  journal = {IEEE Transactions on Information Theory},
  volume = {67},
  number = {12},
  pages = {8079--8095},
  year = {2021}
}

@article{Bica2022_Tradeoff,
  author = {Bica, M. and Saligrama, V.},
  title = {A Rate-Distortion Perspective on Integrated Sensing and Communication},
  journal = {IEEE Transactions on Information Theory},
  volume = {68},
  number = {7},
  pages = {4569--4590},
  year = {2022}
}

@INPROCEEDINGS{Arslan2025_PA,
  author={Akca, Huseyin and Memişoğlu, Ebubekir and Çırpan, Hakan Ali and Arslan, Huseyin},
  booktitle={2024 6th International Conference on Communications, Signal Processing, and their Applications (ICCSPA)}, 
  title={Integrated Sensing and Communication with Power Amplifier Impairment}, 
  year={2024},
  volume={},
  number={},
  pages={1-6},
  doi={10.1109/ICCSPA61559.2024.10794378}}

@article{Le2024Hardware,
  author={Zhang, Xue and Le, Ngoc Phuc and Alouini, Mohamed-Slim},
  journal={IEEE Open Journal of Vehicular Technology}, 
  title={{RIS}-Based {DOA} Estimation for Communication-Assisted Sensing Systems Under Hardware Impairments}, 
  year={2025},
  volume={6},
  number={},
  pages={1736-1748},
  doi={10.1109/OJVT.2025.3580041}}

@inproceedings{Ozturk2022,
  author={{\"O}zt{\"u}rk, C. and Wymeersch, H. and others},
  booktitle={IEEE International Conference on Communications (ICC)}, 
  title={On the Impact of Hardware Impairments on {RIS}-aided Localization}, 
  year={2022},
  pages={1-6}
}

@article{Liu2018_MUMIMO,
  author = {Liu, F. and Masouros, C. and Li, A. and Sun, H. and Hanzo, L.},
  title = {{MU-MIMO} Communications With {MIMO} Radar: From Co-Existence to Joint Transmission},
  journal = {IEEE Transactions on Wireless Communications},
  volume = {17},
  number = {4},
  pages = {2755--2770},
  year = {2018}
}

@article{Hassanien2016_RadarComm,
  author = {Hassanien, A. and Amin, M. G.},
  title = {Dual-Function Radar-Communications Using Phase-Modulated Waveforms},
  journal = {IEEE Transactions on Signal Processing},
  volume = {64},
  number = {8},
  pages = {2168--2181},
  year = {2016}
}

@article{Zheng2019_Coexistence,
  author = {Zheng, L. and Lops, M. and Eldar, Y. C. and Wang, X.},
  title = {Radar and Communication Coexistence: An Overview},
  journal = {IEEE Signal Processing Magazine},
  volume = {36},
  number = {5},
  pages = {85--114},
  year = {2019}
}

@inproceedings{Commsense1,
  author={Sardar, Santu and Mishra, Amit K. and Khan, Mohammed Zafar Ali},
  booktitle={2017 IEEE AFRICON}, 
  title={{LTE-CommSense} system and its feasibility analysis}, 
  year={2017},
  pages={1564-1568}
}

@article{Commsense2,
  author={Sardar, Santu and Mishra, Amit K. and Khan, M. Z. A.},
  journal={IEEE Aerospace and Electronic Systems Magazine}, 
  title={{LTE} commsense for object detection in indoor environments}, 
  year={2018},
  volume={33},
  number={7},
  pages={46-59}
}

@article{Zhang2022_DeepWiFi,
  author={Ahmad, Iftikhar and Ullah, Arif and Choi, Wooyeol},
  journal={IEEE Open Journal of the Communications Society}, 
  title={{WiiFi}-Based Human Sensing With Deep Learning: Recent Advances, Challenges, and Opportunities}, 
  year={2024},
  volume={5},
  number={},
  pages={3595-3623},
  doi={10.1109/OJCOMS.2024.3411529}}

@article{Ma2019_WiFi,
  author={Ma, Y. and Zhou, G. and Wang, S. and Zhao, H. and Jung, W.},
  journal={IEEE Communications Surveys \& Tutorials},
  title={{WiFi} Sensing with Channel State Information: A Survey},
  year={2019},
  volume={21},
  number={2},
  pages={1747-1773}
}

@article{Liu2021_Resource,
  author = {Liu, F. and Masouros, C.},
  title = {Toward Dual-Functional Radar-Communication Systems: Optimal Waveform Design},
  journal = {IEEE Transactions on Signal Processing},
  volume = {69},
  pages = {5598--5613},
  year = {2021},
  doi = {10.1109/TSP.2021.3106110}
}

@article{Li2021Mismatched,
  author = {Liu, Rang and Li, Ming and Liu, Qian and Swindlehurst, A. Lee},
  title = {Dual-Functional Radar-Communication Waveform Design: A Symbol-Level Precoding Approach},
  journal = {IEEE Journal of Selected Topics in Signal Processing},
  volume = {15},
  number = {6},
  pages = {1316-1331},
  year = {2021},
  doi = {10.1109/JSTSP.2021.3111438}
}

@article{Zhang2026,
  author = {Zhang, Z. and Masouros, C. and others},
  title = {Fundamental Tradeoffs for {ISAC} Multiple Access in Finite-Blocklength Regime},
  journal = {arXiv preprint arXiv:2601.05165},
  year = {2026}
}

@article{Dong2024CAS,
  author={Dong, Fuwang and Liu, Fan and Lu, Shihang and Xiong, Yifeng and Zhang, Qixun and Feng, Zhiyong and Gao, Feifei},
  journal={IEEE Journal on Selected Areas in Communications}, 
  title={Communication-Assisted Sensing in {{6G}} Networks}, 
  year={2025},
  volume={43},
  number={4},
  pages={1371-1386},
  doi={10.1109/JSAC.2025.3531548}}

@article{Masouros2020,
  author = {Masouros, C. and Liu, F.},
  title = {Dual-functional Radar-Communication Waveform Design in the Presence of Finite-Alphabet Signaling},
  journal = {IEEE Communications Letters},
  volume = {24},
  number = {7},
  pages = {1414-1418},
  year = {2020}
}

@article{Blandino2023,
  author = {Blandino, S. and others},
  title = {Analysis of {OFDM}-Based {{ISAC}} With Hardware Impairments},
  journal = {IEEE Open Journal of the Communications Society},
  volume = {4},
  pages = {1671-1685},
  year = {2023}
}

@article{Demirhan2022,
  author = {Demirhan, U. and Alkhateeb, A.},
  title = {Joint Radar and Communication Beamforming With Digital Metasurfaces},
  journal = {IEEE Transactions on Wireless Communications},
  volume = {21},
  number = {11},
  pages = {9936-9951},
  year = {2022}
}

@inproceedings{Bahl2000_RADAR,
  author = {Bahl, P. and Padmanabhan, V. N.},
  title = {{RADAR}: an in-building {RF}-based user location and tracking system},
  booktitle = {Proceedings IEEE INFOCOM 2000},
  volume = {2},
  pages = {775-784},
  year = {2000}
}

@article{Tan2021_CommSense,
  author = {Tan, B. and others},
  title = {Integrated Terahertz Communication and Radar Sensing: Feasibility, Opportunities, and Challenges},
  journal = {IEEE Wireless Communications},
  volume = {28},
  number = {1},
  pages = {150-158},
  year = {2021}
}

@article{Fortunati2020_CRB_MIMO_CRB,
  author = {Fortunati, S. and others},
  journal = {IEEE Transactions on Signal Processing},
  title = {On the Use of Massive Arrays for Joint Communication and Radar Sensing},
  year = {2020},
  volume = {68},
  pages = {5441--5456},
  doi = {10.1109/TSP.2020.3025078}
}

@article{Lu2024OpenChallenges,
  author={Lu, Shun and others},
  journal={IEEE Internet of Things Journal},
  title={Integrated Sensing and Communications: Recent Advances and Ten Open Challenges},
  year={2024},
  volume={11},
  number={11},
  pages={19094--19120},
  doi={10.1109/JIOT.2024.3361173}
}

@article{Arikan2009,
  author={E. Arikan},
  journal={IEEE Transactions on Information Theory},
  title={Channel Polarization: A Method for Constructing Capacity-Achieving Codes for Symmetric Binary-Input Memoryless Channels},
  year={2009},
  volume={55},
  number={7},
  pages={3051-3073}
}

@ARTICLE{CRB1,
	author={Zhu, Qi and Li, Ming and Liu, Rang and Liu, Qian},
	journal={IEEE Transactions on Wireless Communications}, 
	title={Cramér-Rao Bound Optimization for Active RIS-Empowered {ISAC} Systems}, 
	year={2024},
	volume={23},
	number={9},
	pages={11723-11736},
	doi={10.1109/TWC.2024.3384501}}

@article{TalVardy2015,
  author={I. Tal and A. Vardy},
  journal={IEEE Transactions on Information Theory},
  title={List Decoding of Polar Codes},
  year={2015},
  volume={61},
  number={5},
  pages={2213-2226}
}

@ARTICLE{PER1,
	author={Meng, Kaitao and Masouros, Christos and Petropulu, Athina P. and Hanzo, Lajos},
	journal={IEEE Transactions on Wireless Communications}, 
	title={Cooperative {ISAC Networks}: Performance Analysis, Scaling Laws, and Optimization}, 
	year={2025},
	volume={24},
	number={2},
	pages={877-892},
	doi={10.1109/TWC.2024.3491356}}

@ARTICLE{Wav1,
	author={Lee, Byunghyun and Kim, Hwanjin and Love, David J. and Krogmeier, James V.},
	journal={IEEE Transactions on Wireless Communications}, 
	title={Spatial-Division ISAC: A Practical Waveform Design Strategy via Null-Space Superimposition}, 
	year={2026},
	volume={25},
	number={},
	pages={6837-6851},
	doi={10.1109/TWC.2025.3627260}}

@ARTICLE{IPSAC,
	author={Lee, Byunghyun and Liu, Rang and Love, David J. and Krogmeier, James V. and Lee Swindlehurst, A.},
	journal={IEEE Transactions on Wireless Communications}, 
	title={Integrated Polarimetric Sensing and Communication With Polarization-Reconfigurable Arrays}, 
	year={2026},
	volume={25},
	number={},
	pages={10618-10634},
	doi={10.1109/TWC.2026.3653252}}

@ARTICLE{Xiong2023_Fundamental,
  author={Xiong, Yifeng and Liu, Fan and Cui, Yuanhao and Yuan, Weijie and Han, Tony Xiao and Caire, Giuseppe},
  journal={IEEE Transactions on Information Theory}, 
  title={On the Fundamental Tradeoff of Integrated Sensing and Communications Under Gaussian Channels}, 
  year={2023},
  volume={69},
  number={9},
  pages={5723-5751},
  doi={10.1109/TIT.2023.3284449}}

@article{wei2021joint,
  title={Joint communication and radar sensing in {6G} wireless systems: Opportunities and challenges},
  author={Wei, Zhiqiang and Yuan, Weijie and Li, Shuangyang and Yuan, Jinhong and Ng, Derrick Wing Kwan and Xia, Xiang-Gen},
  journal={IEEE Wireless Communications},
  volume={28},
  number={1},
  pages={149--159},
  year={2021},
  publisher={IEEE}
}

@article{mezghani2022blind,
  title={Blind self-sensing for {ISAC} systems with hardware impairments},
  author={Mezghani, Amine and Swindlehurst, A Lee},
  journal={IEEE Transactions on Signal Processing},
  volume={70},
  pages={3193--3208},
  year={2022},
  publisher={IEEE}
}

@article{shehab2023finite,
  title={Finite blocklength performance of {ISAC} for {URLLC}},
  author={Shehab, Mohammad and Abbas, Rana and Almradi, Akram and Al-Dhahir, Naofal and Schober, Robert},
  journal={IEEE Open Journal of the Communications Society},
  year={2023},
  publisher={IEEE}
}

@article{condo2021short,
  title={Short-packet polar codes: A survey on design and performance},
  author={Condo, Carlo},
  journal={IEEE Communications Surveys \& Tutorials},
  volume={23},
  number={4},
  pages={2451--2482},
  year={2021},
  publisher={IEEE}
}

@ARTICLE{hua2023mimo,
  author={Hua, Haocheng and Han, Tony Xiao and Xu, Jie},
  journal={IEEE Transactions on Wireless Communications}, 
  title={MIMO Integrated Sensing and Communication: CRB-Rate Tradeoff}, 
  year={2024},
  volume={23},
  number={4},
  pages={2839-2854},
  doi={10.1109/TWC.2023.3303326}}

@misc{Pucci2025_FI_ISAC,
	title={Position and Velocity Estimation Accuracy in MIMO-OFDM ISAC Networks: A Fisher Information Analysis}, 
	author={Lorenzo Pucci and Luca Arcangeloni and Andrea Giorgetti},
	year={2025},
	eprint={2507.01743},
	archivePrefix={arXiv},
	primaryClass={eess.SP},
	url={https://arxiv.org/abs/2507.01743}, 
}

@ARTICLE{Yifeng,
	author={Liu, An and Huang, Zhe and Li, Min and Wan, Yubo and Li, Wenrui and Han, Tony Xiao and Liu, Chenchen and Du, Rui and Tan, Danny Kai Pin and Lu, Jianmin and Shen, Yuan and Colone, Fabiola and Chetty, Kevin},
	journal={IEEE Communications Surveys \& Tutorials}, 
	title={A Survey on Fundamental Limits of Integrated Sensing and Communication}, 
	year={2022},
	volume={24},
	number={2},
	pages={994-1034},
	doi={10.1109/COMST.2022.3149272}}

@techreport{Vuong1986,
  author      = {Quang H. Vuong},
  title       = {Cram{\'e}r--Rao Bounds for Misspecified Models},
  institution = {Division of the Humanities and Social Sciences, California Institute of Technology},
  type        = {Working Paper},
  number      = {652},
  address     = {Pasadena, CA, USA},
  month       = oct,
  year        = {1986}
}

@article{Richmond2015,
  author  = {C. D. Richmond and L. L. Horowitz},
  title   = {Parameter Bounds on Estimation Accuracy Under Model Misspecification},
  journal = {IEEE Transactions on Signal Processing},
  volume  = {63},
  number  = {9},
  pages   = {2263--2278},
  month   = may,
  year    = {2015},
  doi     = {10.1109/TSP.2015.2407314}
}

@article{Fortunati2016,
  author  = {Stefano Fortunati and Fulvio Gini and Maria S. Greco},
  title   = {The Misspecified Cram{\'e}r--Rao Bound and Its Application to Scatter Matrix Estimation in Complex Elliptically Symmetric Distributions},
  journal = {IEEE Transactions on Signal Processing},
  volume  = {64},
  number  = {9},
  pages   = {2387--2399},
  month   = may,
  year    = {2016},
  doi     = {10.1109/TSP.2016.2522433}
}

@book{Kay1993,
  author    = {Steven M. Kay},
  title     = {Fundamentals of Statistical Signal Processing, Volume I: Estimation Theory},
  publisher = {Prentice Hall},
  year      = {1993}
}

@article{Liu2025RandomISAC,
  author  = {Fan Liu and Ya-Feng Liu and Yuanhao Cui and Christos Masouros and Jie Xu and Tony Xiao Han and Stefano Buzzi and Yonina C. Eldar and Shi Jin},
  title   = {Sensing With Communication Signals: From Information Theory to Signal Processing},
  journal = {IEEE Communications Surveys \& Tutorials},
  year    = {2025},
  note    = {Early Access}
}

@article{Lu2025RandomSignals,
  author  = {Shihang Lu and Fan Liu and Yifeng Xiong and Zhen Du and Yuanhao Cui and Shuangyang Li and Weijie Yuan and Jie Yang and Shi Jin},
  title   = {Sensing With Random Communication Signals},
  journal = {arXiv preprint arXiv:2504.06537},
  year    = {2025}
}

@article{Shen2026TIT,
  author  = {X. Shen and Z. Lu and N. Zhao and H. Zhao and Y. Shen},
  title   = {Fundamental Tradeoff of Bistatic ISAC Under Gaussian Fading Channels at Finite Blocklength},
  journal = {IEEE Transactions on Information Theory},
  volume  = {72},
  number  = {2},
  pages   = {1176--1200},
  month   = feb,
  year    = {2026},
  doi     = {10.1109/TIT.2025.3623189}
}

@article{Lin2026TWC,
  author  = {X. Lin and X. Sun and J. Li and P. Zhu and D. Wang and F. Shu and X. You},
  title   = {Adaptive Finite-Blocklength Optimization for the Communication--Sensing Tradeoff in Network-Assisted Full-Duplex Cell-Free ISAC Systems With URLLC Users},
  journal = {IEEE Transactions on Wireless Communications},
  volume  = {25},
  pages   = {17446--17462},
  year    = {2026},
  doi     = {10.1109/TWC.2026.3693349}
}

@book{HornJohnson2013,
  author    = {Roger A. Horn and Charles R. Johnson},
  title     = {Matrix Analysis},
  edition   = {2},
  publisher = {Cambridge University Press},
  address   = {Cambridge, UK},
  year      = {2013},
  isbn      = {9780521548236}
}

@ARTICLE{dfd1,
  author={Bere, Praveen Sai and Khan, Mohammed Zafar Ali and Hanzo, Lajos},
  journal={IEEE Open Journal of Vehicular Technology}, 
  title={A Low-Complexity Diversity-Preserving Universal Bit-Flipping Enhanced Hard Decision Decoder for Arbitrary Linear Codes}, 
  year={2024},
  volume={5},
  number={},
  pages={1496-1517},
  doi={10.1109/OJVT.2024.3437470}}

@article{dfd2,
author = {Sai, Praveen and Khan, Mohammed and Hanzo, L.},
year = {2026},
month = {05},
pages = {1-17},
title = {A Universal Generalized Flip Decoder for Next-Generation URLLC Systems},
journal = {IEEE Transactions on Vehicular Technology},
doi = {10.1109/TVT.2026.3697687}
}
%\end{thebibliography}

\end{document}